\theoremstyle{plain}
\newtheorem{assumption}{Assumption}
\newtheorem{definition}{Definition}
\newtheorem{lemma}{Lemma}
\newtheorem{proposition}{Proposition}
\newtheorem{corollary}{Corollary}
\newtheorem{theorem}{Theorem}
\newtheorem{fact}{Fact} 
\newtheorem{example}{Example}
\title{\LARGE \bf Koopman Operator in the Weighted Function Spaces and its Learning for the Estimation of Lyapunov and Zubov Functions*}
\author{Wentao Tang$^{1}$
\thanks{*This work is supported by the faculty startup fund from the NC State University and ACS Petroleum Research Fund \#66911-DNI9.}
\thanks{$^{1}$ Wentao Tang is an assistant professor with the Department of Chemical and Biomolecular Engineering, North Carolina State University, Raleigh, NC 27695, U.S.A. {\tt\small wentao\_tang@ncsu}}
}
\begin{document}
\maketitle\thispagestyle{empty}\pagestyle{empty}

\begin{abstract}
The mathematical properties and data-driven learning of the Koopman operator, which represents nonlinear dynamics as a linear mapping on a properly defined functional spaces, have become key problems in nonlinear system identification and control. However, Koopman operators that are approximately learned from snapshot data may not always accurately predict the system evolution on long horizons. 
In this work, by defining the Koopman operator on a space of \emph{weighted} continuous functions and learning it on a \emph{weighted} reproducing kernel Hilbert space, the Koopman operator is guaranteed to be contractive and the accumulation learning error is bounded. The weighting function, assumed to be known \textit{a priori}, has an exponential decay with the flow or decays exponentially when compensated by an exponential factor. 
Under such a construction, the Koopman operator learned from data is used to estimate (i) Lyapunov functions for globally asymptotically stable dynamics, and (ii) Zubov-Lyapunov functions that characterize the domain of attraction. For these estimations, probabilistic bounds on the errors are derived. 
\end{abstract}

\section{\textsc{Introduction}}
Koopman operator of a dynamical system, defined on a suitable function space as the mapping from any function (called \emph{observable}) to its composition with the flow of the dynamics, provides a useful tool to identify, analyze, and control nonlinear systems \cite{mauroy2020koopman, bevanda2021koopman, brunton2022modern}. The literature in the recent decade has shown the wide application of this concept in fluid dynamics \cite{mezic2013analysis, arbabi2017study}, robotics \cite{bruder2020data, shi2022deep}, and chemical processes \cite{son2021application, zhang2023reduced}. 
While the Koopman operator is in principle defined on an infinite-dimensional function space and difficult to characterize (due to its non-closed, non-compact properties) \cite{singh1993composition}, with large data, its finite-rank approximations can be constructed via machine learning algorithms \cite{williams2015data, klus2016numerical}. Such approximations have been proven to possess desirable properties, such as the convergence in operator topology in the large data limit \cite{korda2018convergence}. In the context of ergodic dynamics with normal Koopman operators, the problem of approximating the spectral measure has been studied \cite{korda2020data, colbrook2024rigorous}. 

\par A major practical issue with the data-driven estimation of Koopman operator is that when the learning is based on snapshot data of the dynamics, the learned operator can fail to perform \emph{long-horizon predictions}. 
To mitigate such an error, in deep learning-based approaches, it has become a well-accepted approach to set the loss function for training as the sum of squared errors over a long horizon \cite{lusch2018deep, otto2019linearly}. In kernel methods, \cite{bevanda2024koopman} used a kernel function to account for the similarity between trajectories over a horizon (instead of snapshot points). These approaches are natural from a learning theory point of view -- to train on a horizon of $T$ for prediction on a horizon of $T$. 
On the other hand, when the nonlinear system is approximately embedded as such, the inherent qualitative behavior, specifically stability, still lacks a formal guarantee to be preserved. That is, \emph{a stable nonlinear dynamics can be learned as unstable with a small error on the Koopman operator}. 

\par To restrict the learned Koopman dynamics to be stable, Fan et al. \cite{fan2022learning} proposed to use a parameterization of Schur stable matrices, so that the learning of finite-rank approximation of Koopman operator can be cast as an unconstrained one. A similar parameterization for Hurwitz matrices (for continuous-time dynamics) was used in Bevanda et al. \cite{bevanda2022diffeomorphically}. 
Instead of imposing the stability constraint or using parameterizations, in this work, the notion of a \emph{space of weighted continuous functions}, namely the class of all functions in the form of $w(\cdot)g(\cdot)$ where $g$ is any continuous function and $w$ is a given weighting function, is introduced. Using a universal kernel $k(\cdot, \cdot)$, an associated \emph{weighted reproducing kernel Hilbert space} (RKHS) can be constructed to be dense in the space of weighted continuous functions. Here, the weighting function $w$ is one whose value decreases exponentially with the dynamics. 
We propose to learn the Koopman operator in such a weighted RKHS, which, if exactly learned, is guaranteed to be contractive. A probabilistic upper bound on the error of Koopman operator is derived in the similar form to \cite{kostic2022learning} but in a modified sense. 

\par In addition, here we consider the problem of \emph{estimating the Lyapunov function} (as a stability certificate and a prerequisite for Lyapunov-based control \cite{haddad2008nonlinear}) \emph{and Zubov function} (whose level sets give information about the domain of attraction (DoA) \cite{zubov1964methods}) of a globally and locally asymptotically stable nonlinear system, respectively. 
Evidently, to construct Lyapunov and Zubov functions, the learned Koopman (linear) dynamics needs to be a contractive one, so that the operator Lyapunov equation \cite{curtain1978infinite} can provide a unique solution that determines the Lyapunov function in the original nonlinear system. 
When the system is globally asymptotically stable, based on the error bound of learning the Koopman operator in the weighted RKHS, this paper derives a corresponding bound on the average estimation error of the Lyapunov function. 
When the system is only locally asymptotically stable, we learn the Zubov-Koopman operator, as defined in \cite{meng2023learning}, in the weighted RKHS, which then similarly guarantees an error-bounded estimation on the Zubov function. 

\par \emph{Notations:} In this paper, we use lower-case letters to represent scalars, vectors, and functions, and upper-case letters for matrices, operators, and sets. Upper letters in calligraphic fonts stand for function spaces, e.g., $\mathcal{C}(\Omega)$ -- space of continuous function defined on region $\Omega$, and $\mathcal{H}$ -- reproducing kernel Hilbert space, or $\mathcal{P}$ -- probability distribution. Vector norms, supremum norms of functions, and operator norms are denoted as $\|\cdot\|$. Inner products are denoted as $\langle \cdot, \cdot \rangle$. 
The product of two functions $w$ and $g$ in the pointwise sense is denoted as $w\cdot g$, i.e., $(w\cdot g)(x) = w(x)g(x)$. The composition of $g$ with $f$ is denoted as $g\circ f$, i.e., $(g\circ f)(x) = g(f(x))$, in particular, $f^2 := f\circ f$, and recursively $f^t:= f^{t-1} \circ f$ for $t=1, 2, \cdots$.
The set of Hilbert-Schmidt operators on $\mathcal{H}$ is denoted as $\mathrm{HS}(\mathcal{H})$ and the Hilbert-Schmidt norm of any $A\in \mathrm{HS}(\mathcal{H})$ is denoted as $\|A\|_{\mathrm{HS}}$. Finally, $\mathbb{I}_m = \{1, 2, \dots, m\}$.

\section{\textsc{Preliminaries}}
Consider a discrete-time nonlinear dynamical system (with time index $t\in \mathbb{N}$):
\begin{equation}\label{eq:system}
    x[t+1] = f(x[t])
\end{equation}
defined on a simply connected region $\Omega \subset \mathbb{R}^n$, i.e., $f: \Omega\rightarrow \Omega$. We aim to study the dynamics without modeling $f$ explicitly. 
\begin{assumption}\label{assum:dynamics}
    $\Omega$ is compact, and $f$ is continuous on $\Omega$. 
\end{assumption}

\subsection{Koopman Operator for Nonlinear Dynamics}
For the system \eqref{eq:system}, one can examine a corresponding ``flow'' with the dynamics of any function, called an \emph{observable}, defined on a function space $\mathcal{G}$. Suppose that the system starts from $x[0] = \xi \in \Omega$ and an observable $g\in\mathcal{G}$ is given. Then at $t=1, 2, \cdots$, the value of such an observable becomes $g(x[1]) = (g\circ f)(\xi)$, $g(x[2]) = (g\circ f\circ f)(\xi)$, $\cdots$. Thus, on the function space $\mathcal{G}$, the nonlinear dynamics is captured by the composition operator with $f$. 
\begin{definition}
    The \emph{Koopman operator} or \emph{composition operator} of system \eqref{eq:system} refers to $A: \mathcal{G}\rightarrow\mathcal{G}$, $g\mapsto g\circ f$. Here the function space $\mathcal{G}$ must be closed under $A$. By assuming $f\in \mathcal{C}(\Omega)$, here we use $\mathcal{G} = \mathcal{C}(\Omega)$. 
\end{definition}
Obviously, $A$ is a linear operator. Since $\|Ag\| = \sup_{x\in \Omega} |Ag(x)| = \sup_{x\in \Omega} |g(f(x))| \leq \sup_{y\in \Omega} |g(y)| = \|g\|$ holds for any $g\in \mathcal{C}(\Omega)$, we note that $A$ is a bounded operator with $\|A\|\leq 1$. Therefore, $A$ is a continuous and bounded operator on the Banach space $\mathcal{C}(\Omega)$. However, such a property may be too broad for data-based techniques. For the purpose of approximating the Koopman operator from a large but finite amount of data, RKHS is typically used in machine learning methods for nonlinear modeling, colloquially known as ``kernel tricks''.

\subsection{Reproducing Kernel Hilbert Space (RKHS)}
The definition of a RKHS involves a kernel function. Here we shall only consider real-valued ones. 
\begin{definition}
    A function $k:\Omega\times\Omega \rightarrow \mathbb{R}$ is a \emph{kernel} on $\Omega$ if for any finite choice of $x_1, \dots, x_m \in \Omega$, the resulting kernel matrix $K = [k(x_i, x_j)] \in \mathbb{R}^{m\times m}$ is symmetric and positive semidefinite. The RKHS on $\Omega$, denoted as $\mathcal{H} = \mathcal{H}_k(\Omega)$ is the completion of the following pre-Hilbert space:
    $$\mathcal{H}' = \left\{\sum_{i=1}^m c_ik(x_i, \cdot): x_i \in \Omega, c_i\in\mathbb{R}, \, i\in\mathbb{I}_m; m\in\mathbb{N} \right\}$$
    endowed with the inner product: $\langle k(x_i, \cdot), k(x_j, \cdot) \rangle = k(x_i, x_j)$. 
\end{definition}
We shall always assume that $k$ is a continuous function, so that any $k(x, \cdot)\in \mathcal{C}(\Omega)$ and thus $\mathcal{H}\subset \mathcal{C}(\Omega)$. 

\par Since $\Omega\subset\mathbb{R}^d$, the RKHS is a separable function space, i.e., it has a countable dense subset (e.g., one that contains all rational linear combinations of $k_x(\cdot)$ where $x\in \Omega \cap \mathbb{Q}^d$). Thus, $\mathcal{H}$ has a complete orthonormal basis. In fact, considering the integral operator defined by the kernel: 
$$T_k: \mathcal{H}\rightarrow \mathcal{H}, \, 
g\mapsto \int_\Omega k(\cdot, x)g(x)dx, $$ 
the pairs of eigenvalues and associated eigenfunctions can be arranged as $\{(\lambda_j, \phi_j)\}_{j=1}^\infty$ with $\lambda_1\geq \lambda_2 \geq \cdots \geq 0$. Hence, $\{(\lambda_j^{-1/2}\phi_j)\}_{j=1}^\infty$ forms an orthonormal basis for $\mathcal{H}$; that is, for any $g\in \mathcal{H}$, the sequence of its Fourier coefficients with respect to these basis functions becomes an element in $\ell^2$. These facts are well known in functional analysis, e.g., Lax \cite{lax2014functional}. Here it is desired that any continuous function can be sufficiently well approximated by a function in the RKHS. 
\begin{definition}
    A kernel $k$ on $\Omega$ is said to be universal if $\mathcal{H}_k(\Omega)$ is dense in $\mathcal{C}(\Omega)$.
\end{definition}
A sufficient condition for universality is given by the following fact \cite[Corollary~4.57]{steinwart2008support}. As examples, Gaussian kernels $k(x, x^\prime) = \exp(-\|x - x^\prime\|_2^2/h^2)$ ($h>0$) and Mat{\'{e}}rn kernels \cite{williams2006gaussian} are universal. 
\begin{fact}
    If the kernel $k$ can be expressed as 
    $k(x, x^\prime) = \psi(\langle x, x^\prime\rangle)$
    for an analytical function $\psi: (-r, r)\rightarrow \mathbb{R}$ with some $r>\max_{x\in \Omega} \|x\|_2$, then $k$ is a universal kernel. 
\end{fact}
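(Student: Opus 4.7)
The plan is to combine a Taylor-series expansion of $\psi$ with the Stone--Weierstrass theorem. Since $\psi$ is real-analytic on $(-r,r)$ and $\langle x, x'\rangle$ stays within this interval for all $x, x' \in \Omega$, one has a convergent power series $\psi(t) = \sum_{n=0}^\infty a_n t^n$, yielding a uniformly convergent expansion
$k(x, x') = \sum_{n=0}^\infty a_n \langle x, x'\rangle^n$ on $\Omega \times \Omega$.  This writes $k$ as an infinite sum of polynomial kernels of increasing degree, which is the structural observation driving the whole proof.

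First, I would recall the well-known characterization that for each $n$, the polynomial kernel $(x, x') \mapsto \langle x, x'\rangle^n$ is positive semidefinite and its RKHS consists precisely of the homogeneous polynomials of degree $n$ on $\Omega$.  Second, I would invoke the Aronszajn sum-of-kernels theorem, which identifies the RKHS of $k_1 + k_2$ with $\mathcal{H}_{k_1} + \mathcal{H}_{k_2}$ (equipped with the infimum norm), and extend this identification to the infinite sum appearing in the Taylor expansion by a monotone convergence / telescoping argument. The conclusion of this step is the inclusion $\mathcal{H}_k \supset \mathbb{R}[x_1, \dots, x_d]$, i.e.\ every multivariate polynomial lies in the RKHS generated by $k$.

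Third, since $\Omega$ is compact and the polynomial algebra both separates points and contains the constants, the Stone--Weierstrass theorem gives that $\mathbb{R}[x_1, \dots, x_d]$ is dense in $\mathcal{C}(\Omega)$ in the supremum norm. Combining this with the inclusion established in the previous step immediately yields density of $\mathcal{H}_k$ in $\mathcal{C}(\Omega)$, which is exactly universality.

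The main obstacle will be the second step, in particular making the passage from finite to infinite sums of kernels rigorous and ensuring that \emph{every} monomial $x^\alpha$ actually lies in $\mathcal{H}_k$ rather than only in its closure. This is clean provided the Taylor coefficients $a_n$ are strictly positive for all $n$: one can then extract any degree-$|\alpha|$ monomial from the RKHS of the $n=|\alpha|$ component, paying a cost $1/a_{|\alpha|}$ in the norm, and a finite linear combination builds an arbitrary polynomial with a finite $\mathcal{H}_k$-norm. If some $a_n$ vanish, the argument fails for the corresponding degree, and one needs a stronger structural hypothesis on $\psi$ (which is precisely what Corollary~4.57 in \cite{steinwart2008support} is crafted to supply in the analytic setting).
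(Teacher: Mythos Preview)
The paper does not actually prove this Fact; it is stated with a citation to \cite[Corollary~4.57]{steinwart2008support} and no argument is given. Your sketch therefore goes well beyond what the paper supplies, and it follows precisely the strategy used in Steinwart--Christmann: expand $\psi$ as a power series, recognize each term $a_n\langle x,x'\rangle^n$ as a scaled polynomial kernel whose RKHS contains the degree-$n$ homogeneous polynomials, conclude via Aronszajn's inclusion that $\mathcal{H}_k$ contains all polynomials, and finish with Stone--Weierstrass.

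Your diagnosis of the main obstacle is exactly right, and in fact it is not merely a technical wrinkle but a genuine hypothesis: the actual Corollary~4.57 in \cite{steinwart2008support} assumes $a_n>0$ for \emph{all} $n\ge 0$, which the paper's restatement omits. Without this, the argument can fail outright (e.g.\ if $\psi$ is even, the resulting RKHS contains only even-degree polynomials and cannot separate $x$ from $-x$). So your proposal is correct as a proof of the result the paper intends to quote, and your caveat about the positivity of the coefficients is not optional but essential to close the argument.
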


\par When the chosen kernel is universal, it becomes natural to approximate the Koopman operator as a bounded operator on the RKHS. On a separable Hilbert space, a bounded operator can then be arbitrarily well approximated by Hilbert-Schmidt operators. Since $\mathrm{HS}(\Omega)$ is a Hilbert space \textit{per se}, the representer theorem will allow the convex learning of a Hilbert-Schmidt operator to admit a finite-rank solution \cite{khosravi2023representer}. However, it is noted that RKHS usually does not admit the definition of the composition operator, i.e., it is not warranted that $g\circ f \in \mathcal{H}$ for any $g\in \mathcal{H}$. Therefore, the approximate characterization of Koopman operator on RKHS needs to involve the injection from $\mathcal{H}$ to $\mathcal{C}(\Omega)$. This is the framework of Kostic et al. \cite{kostic2022learning}, which is reviewed next.

\subsection{Approximately Learning Koopman Operators on RKHS}\label{subsec:approximate}
\par Let us denote by $J: \mathcal{H} \rightarrow \mathcal{C}(\Omega), \, g\mapsto g$ as the injection map. To obtain an approximation of Koopman operator $A$ defined on $\mathcal{C}(\Omega)$ on its subspace $\mathcal{H}$, one should look for an operator $\hat{A}: \mathcal{H} \rightarrow \mathcal{H}$ such that $J\hat{A} \approx AJ: \mathcal{H} \rightarrow \mathcal{C}(\Omega)$ in a proper sense. Note that when $\Omega$ is a compact region and $k$ is a continuous kernel, the injection $J$ is a compact operator. Thus, $AJ$ is compact and then we look for $\hat{A}\in\mathrm{HS}(\mathcal{H})$ such that the following \emph{risk} is minimized:
\begin{equation}\label{eq:risk}
    \rho(\hat{A}) = \mathbb{E}_{x\sim \mathcal{P}} \left[\left\| \hat{A}^\ast k(x, \cdot) - k(f(x), \cdot)\right\|^2\right].
\end{equation}
Here the risk is evaluated as an expectation over a probability distribution $\mathcal{P}$ of interest on $\Omega$, under which $x$ is regarded as a random vector. $\hat{A}^\ast$ refers to the adjoint operator of $\hat{A}$. The reason to define $\rho$ based on the difference between $\hat{A}^\ast k(x, \cdot)$ and $k(f(x), \cdot)$ is that, if $\hat{A}$ was exactly learned (i.e., is the restriction of the actual Koopman operator $A$ restricted on $\mathcal{H}$), then it should satisfy, for any $x\in \Omega$ and $g\in \mathcal{H}$:
$$ \langle A^\ast k(x, \cdot), g\rangle = \langle k(x, \cdot), Ag\rangle = \langle k(x, \cdot), g(f(\cdot)) \rangle = g(f(x)), $$
namely $$A^\ast k(x, \cdot) = k(f(x), \cdot).$$

\par Suppose that an independent and identically distributed sample $\{(x_i, y_i)\}_{i=1}^m$ is drawn from $\mathcal{P}$, with $y_i=f(x_i)$. Denote by $\hat{\mathcal{P}} = \frac{1}{m}\sum_{i=1}^m \delta_{x_i}$ the empirical distribution. Then we write the \emph{empirical risk} as 
\begin{equation}\label{eq:empirical.risk}
    \begin{aligned}
        \hat{\rho}(\hat{A}) &= \mathbb{E}_{x\in \hat{\mathcal{P}}} \left[\left\| \hat{A}^\ast k(x, \cdot) - k(f(x), \cdot)\right\|^2\right] \\
        &= \frac{1}{m}\sum_{i=1}^m \left\| \hat{A}^\ast k(x_i, \cdot) - k(y_i, \cdot)\right\|^2 . 
    \end{aligned} 
\end{equation}
A natural idea to mitigate overfitting is to restrict the norm and rank of the learned operator $\hat{A}$. The \emph{reduced rank regression (RRR)} formulation for Koopman operator learning on RKHS is as follows:
\begin{equation}
    \hat{\rho}_{\beta, r} := \min_{\hat{A} \in \mathrm{HS}(\mathcal{H}), \, \mathrm{rank}\, \hat{A}\leq r} \hat{\rho}(\hat{A}) + \beta\|\hat{A}\|_{\mathrm{HS}}^2.
\end{equation}
According to the representer theorem \cite{khosravi2023representer}, the solution of the RRR problem can be restricted to $\mathrm{span}\{k(x_i, \cdot)\otimes k(y_j, \cdot): i, j\in \mathbb{I}_m\}$, where $\otimes$ is the tensor product. Specifically, $g_1\otimes g_2$ for given $g_1, g_2\in \mathcal{H}$ is the rank-$1$ operator such that $(g_1\otimes g_2)h = \langle g_2, h\rangle g_1$ (for any $h\in \mathcal{H}$). That is, 
\begin{equation}\label{eq:estimated-Koopman}
    \hat{A}_{\beta, r} = \sum_{i=1}^m \sum_{j=1}^m\theta_{ij}k(x_i, \cdot) \otimes k(y_j, \cdot)
\end{equation}
for some $\Theta = [\theta_{ij}] \in \mathbb{R}^{m\times m}$. In fact \cite{kostic2022learning}, the solution of $\Theta$ can be found explicitly by 
\begin{enumerate}
    \item calculating the kernel matrices $K = [k(x_i, x_j)]$ and $L = [k(y_i, y_j)]$, 
    \item solving the generalized eigenvalue problem $\frac{1}{m^2}LKu_i = \sigma_i^2 (\frac{1}{m}K+\beta I)u_i$ for $i\in \mathbb{I}_r$ corresponding to the largest $r$ eigenvalues $\sigma_1^2\geq \cdots \geq \sigma_r^2 \geq 0$, 
    \item normalizing the eigenvectors to $u_i^\top K(K+\beta I)u_i = 1$ for $i\in \mathbb{I}_r$; and finally, 
    \item letting $U_r = [u_1, \cdots, u_r]$ and $\Theta = (1/m)U_rU_r^\top K$.
\end{enumerate}

As $\|A\|\leq 1$, it is desired that $\hat{A}$ is also bounded. Here we provide only a rough upper bound on $\|\hat{A}\|$. 
\begin{proposition}
    The estimated Koopman operator $\hat{A}_{\beta, r}$ obtained above satisfies the following bound:
    \begin{equation}\label{eq:estimated-Koopman-bound}
        \|\hat{A}\|\leq \lambda_{\max}(L)/\beta m,  
    \end{equation}
    where $\lambda_{\max}(L)$ represents the largest eigenvalue of $L$. 
\end{proposition}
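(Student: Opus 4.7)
The plan is to factor $\hat{A}_{\beta,r}$ through the sampling operators on $\mathcal{H}$ and then bound its operator norm using the product structure of $\Theta$ together with the normalization of the generalized eigenvectors.

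First, I would introduce the sampling operators $S_X, S_Y : \mathcal{H} \to \mathbb{R}^m$ defined by $(S_X g)_i = g(x_i) = \langle g, k(x_i,\cdot)\rangle$ and analogously for $S_Y$; their adjoints satisfy $S_X^\ast v = \sum_{i=1}^m v_i\, k(x_i,\cdot)$. By the reproducing property one has $S_X S_X^\ast = K$ and $S_Y S_Y^\ast = L$, so that $\|S_X\|^2 = \lambda_{\max}(K)$ and $\|S_Y\|^2 = \lambda_{\max}(L)$. Expanding the rank-one tensors in \eqref{eq:estimated-Koopman} against any $g \in \mathcal{H}$ gives $\hat{A}_{\beta,r}\,g = \sum_{i,j}\theta_{ij}\, g(y_j)\, k(x_i,\cdot)$, which coincides with $S_X^\ast \Theta S_Y g$. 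Therefore $\hat{A}_{\beta,r} = S_X^\ast \Theta S_Y$.

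Second, I would compute $\hat{A}_{\beta,r}^\ast \hat{A}_{\beta,r} = S_Y^\ast \Theta^\top (S_X S_X^\ast)\Theta S_Y = S_Y^\ast(\Theta^\top K \Theta)S_Y$ and use the elementary ``sandwich'' bound $\|S_Y^\ast M S_Y\| \leq \|S_Y\|^2\,\|M\|$, valid for any symmetric matrix $M$, to deduce
\[
\|\hat{A}_{\beta,r}\|^2 \leq \lambda_{\max}(L)\,\|\Theta^\top K \Theta\|.
\]
Substituting $\Theta = (1/m) U_r U_r^\top K$ gives $\Theta^\top K \Theta = (1/m^2)(K U_r)(U_r^\top K U_r)(U_r^\top K)$, and submultiplicativity (together with $\|K U_r\| = \|U_r^\top K\|$) yields $\|\Theta^\top K \Theta\| \leq (1/m^2)\,\|U_r^\top K^2 U_r\|\,\|U_r^\top K U_r\|$.

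Third, I would exploit the normalization of the $u_i$ implicit in the generalized eigenvalue problem, interpreted consistently with the SPD right-hand side $K/m + \beta I$: this orthonormalizes the transformed vectors $V_r := (K/m + \beta I)^{1/2} K^{1/2} U_r$, i.e.\ $V_r^\top V_r = I_r$. Substituting $U_r = K^{-1/2}(K/m+\beta I)^{-1/2} V_r$ and simplifying gives $U_r^\top K U_r = V_r^\top (K/m+\beta I)^{-1} V_r$, whose spectral norm is at most $1/\beta$, and $U_r^\top K^2 U_r = V_r^\top K(K/m+\beta I)^{-1} V_r$, whose spectral norm is at most $m$ (since the scalar function $\mu \mapsto \mu/(\mu/m+\beta)$ is bounded by $m$). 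Multiplying these two estimates gives $\|\Theta^\top K \Theta\| \leq 1/(\beta m)$, and combining with the previous step yields the stated bound on $\|\hat{A}_{\beta,r}\|$.

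The main obstacle is precisely the normalization step: a naive application of submultiplicativity using $\|K\|$ would introduce a spurious factor of $\lambda_{\max}(K)$ into the bound. Only by carefully using the fact that $\beta I$ in the denominator $K/m + \beta I$ acts as a floor while $K$ in the numerator is bounded above by $m(K/m + \beta I)$ can one extract a clean $1/(\beta m)$ prefactor with no residual dependence on $\lambda_{\max}(K)$ --- leaving only the $\lambda_{\max}(L)$ factor inherited from $\|S_Y\|^2$, as claimed.
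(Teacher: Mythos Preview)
Your argument is clean up to the final sentence, but it does not deliver the stated inequality. What you actually prove is
\[
\|\hat{A}_{\beta,r}\|^2 \;\leq\; \lambda_{\max}(L)\,\|\Theta^\top K\Theta\| \;\leq\; \frac{\lambda_{\max}(L)}{\beta m},
\]
i.e.\ $\|\hat{A}_{\beta,r}\| \leq \sqrt{\lambda_{\max}(L)/(\beta m)}$. The proposition claims $\|\hat{A}_{\beta,r}\| \leq \lambda_{\max}(L)/(\beta m)$, which is equivalent to $\|\hat{A}_{\beta,r}\|^2 \leq \lambda_{\max}^2(L)/(\beta m)^2$. These two bounds are incomparable: yours is weaker precisely when $\beta m < \lambda_{\max}(L)$, which is the regime in which the bound is supposed to be informative (namely $\lambda_{\max}(L)/(\beta m) < 1$). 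So the last line, ``combining with the previous step yields the stated bound,'' is where the argument breaks.

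The missing ingredient is the generalized eigenvalue relation itself. Your route uses only the \emph{normalization} $U_r^\top K(\tfrac{1}{m}K+\beta I)U_r = I_r$; it never touches the fact that $\tfrac{1}{m^2}LKU_r = (\tfrac{1}{m}K+\beta I)U_r\Sigma_r$. Because of that, $L$ enters your estimate only once, through $\|S_Y\|^2$, and you can extract only one factor of $1/\beta$. The paper's proof substitutes the GEP relation into the expression $\Theta^\top K\Theta$ (with $\Theta = \tfrac{1}{m}U_rU_r^\top K$), which replaces blocks of the form $(K+\beta I)U_r\Sigma_r$ by $LKU_r$; this causes $L$ to appear \emph{twice} inside the $\lambda_{\max}$, yielding a term of the form $U_r^\top KLK^{-1}LKU_r$. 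Bounding that by $\lambda_{\max}^2(L)\,\lambda_{\max}(U_r^\top K U_r)$ and applying the normalization a second time produces the extra $1/\beta$ and the extra $\lambda_{\max}(L)$ you are missing. If you want to keep your $S_X^\ast\Theta S_Y$ factorization, you can still recover the claimed bound, but you must inject the eigenvalue identity into the middle block rather than relying on the sandwich estimate $\|S_Y^\ast M S_Y\|\leq \lambda_{\max}(L)\|M\|$ alone.
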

\begin{proof}
    From \eqref{eq:estimated-Koopman}, we calculate 
    $$\|\hat{A}\| = \lambda_{\max}(K^{-1/2}L\Theta^\top K\Theta LK^{-1/2}). $$
    Since $U_r^\top K(K+\beta I)U_r = I_r$, we have $U_r^\top KU_r \leq (1/\beta)I_r$. Thus, substituting the expression of $\Theta$ from step 4) and simplifying, we obtain
    \begin{align*}
        \|\hat{A}\|^2 &\leq \frac{1}{\beta m^2}\lambda_{\max} (U_r^\top (K+\beta I) \Sigma_r K^{-1} \Sigma_r (K+\beta I) U_r) \\
        &= \frac{1}{\beta m^2}\lambda_{\max} (U_r^\top KLK^{-1}LKU_r) \\
        &\leq \frac{\lambda^2_{\max}(L)}{\beta m^2}\lambda_{\max}(U_r^\top KU_r) \leq \frac{\lambda^2_{\max}(L)}{\beta^2 m^2}, 
    \end{align*}
    which is the proposition to prove. 
\end{proof}

\par In \cite{kostic2022learning}, a probabilistic bound on the generalization loss of the RRR estimator, i.e., $|\rho(\hat{A}_{\beta, r}) - \hat{\rho}_{\beta, r}|$, was derived in terms of the regularization coefficient $\beta$ and rank $r$ with confidence $1-\delta$. In the afore-mentioned paper, the dynamics was assumed to be measure-preserving and the true Koopman operator was defined on the $L_2$ space with respect to the invariant measure. In this work, we instead consider asymptotically stable dynamics. Similar bound will be derived and used for subsequent Koopman and Zubov function estimations.

\section{\textsc{Koopman and Zubov-Koopman Operators in Weighted Function Spaces}}
In order to preserve the asymptotic stability information in the Koopman operator estimation, we assume that we indeed have a ``certificate'' for asymptotic stability \textit{a priori}. 
\begin{assumption}\label{assum:weight}
    A function $w: \Omega \rightarrow [0, \infty)$ is known, which satisfies $w(0)=0$ and for all $x\in \Omega$, $w(f(x))/w(x) \leq \alpha$ for a known constant $\alpha \in (0, 1)$. 
\end{assumption}
Such a stability certificate stipulates the rate of attraction to the origin. For example, if $w(x) = \|x\|$, then the state converges to the origin at an exponential rate; if $w(x) = \exp(\|x\|^\alpha)$, then the state converges at the rate of $t^{-1/\alpha}$. It is reasonable to believe that the acquisition of function $w$ is much easier than modeling the nonlinear system itself, and can be based on very inaccurate models or even qualitative information about the dynamics. 

\par Within an RKHS framework, we also make the following standing assumption about the kernel to use.  
\begin{assumption}\label{assum:kernel}
    Kernel $k$ is continuous, bounded (without loss of generality, $\sup_{x\in \Omega} k(x,x) \leq 1$), and universal. 
\end{assumption}

\subsection{Koopman Operator in Weighted Function Spaces}
With the given $w$ function, we may now define the weighted continuous function space and weighted RKHS as follows. 
\begin{definition}
    The weighted continuous function space with weighting $w$ is 
    $$\mathcal{C}_w(\Omega) = \{w\cdot g: \, g\in \mathcal{C}(\Omega)\}, $$
    endowed with the norm: $\|w\cdot g\|_{\mathcal{C}_w(\Omega)} = \|g\|_{\mathcal{C}(\Omega)}$. 
    The weighted RKHS with kernel $k$ and weighting $w$ is  
    $$\mathcal{H}_{k,w}(\Omega) = \{w\cdot g: g\in \mathcal{H}_k(\Omega)\}, $$
    endowed with an inner product $\langle w\cdot g_1, w\cdot g_2 \rangle_{\mathcal{H}_{k,w}(\Omega)} = \langle g_1, g_2 \rangle_{\mathcal{H}_k(\Omega)}$.
\end{definition}
\begin{proposition}
    It is easy to observe the following properties. 
\begin{itemize}[nolistsep]
    \item $\mathcal{C}_w(\Omega)$ is a Banach space under its norm. 
    \item $\mathcal{H}_{k,w}(\Omega)$ is a RKHS by itself with the kernel defined by $k_w(x, y) := w(x)w(y)k(x, y)$ for all $x, y\in \Omega$. 
    \item $\mathcal{H}_{k,w}(\Omega)$ is dense in $\mathcal{C}_w(\Omega)$ with respect to the norm of $\mathcal{C}_w(\Omega)$ under Assumption \ref{assum:kernel}. 
\end{itemize} 
\end{proposition}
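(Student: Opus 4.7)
The plan is to verify all three bullets in turn by regarding the multiplication operator $M_w: g \mapsto w \cdot g$ as a linear isometric isomorphism: from $\mathcal{C}(\Omega)$ onto $\mathcal{C}_w(\Omega)$, and from $\mathcal{H}_k(\Omega)$ onto $\mathcal{H}_{k,w}(\Omega)$. First I would record the underlying injectivity of $M_w$ so that the defining norm and inner product are unambiguous: under Assumption \ref{assum:weight}, the ratio condition forces $w(x) > 0$ for $x \ne 0$ (otherwise the ratio is undefined), so if $w\cdot g_1 = w\cdot g_2$ for continuous $g_1, g_2$, then $g_1 = g_2$ on $\Omega \setminus \{0\}$ and by continuity on all of $\Omega$. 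This makes $\|w\cdot g\|_{\mathcal{C}_w(\Omega)} := \|g\|_{\mathcal{C}(\Omega)}$ and the analogous inner product well defined.

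For bullet (1), completeness of $\mathcal{C}_w(\Omega)$ is immediate from the isometry: every Cauchy sequence $\{w \cdot g_n\}$ in $\mathcal{C}_w(\Omega)$ corresponds to a Cauchy sequence $\{g_n\}$ in the Banach space $\mathcal{C}(\Omega)$, whose limit $g$ yields $w \cdot g$ as the limit in $\mathcal{C}_w(\Omega)$. For bullet (2), $M_w$ being a unitary by construction makes $\mathcal{H}_{k,w}(\Omega)$ a separable Hilbert space; to identify the reproducing kernel, I would check for each fixed $x$ that $k_w(x, \cdot) = w \cdot [w(x) k(x, \cdot)]$ lies in $\mathcal{H}_{k,w}(\Omega)$, and then for any $w \cdot g \in \mathcal{H}_{k,w}(\Omega)$ compute
$$\langle k_w(x, \cdot), w \cdot g\rangle_{\mathcal{H}_{k,w}} = \langle w(x) k(x, \cdot), g\rangle_{\mathcal{H}_k} = w(x) g(x) = (w\cdot g)(x),$$
using the reproducing property of $k$ in $\mathcal{H}_k(\Omega)$. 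Symmetry and positive semi-definiteness of $k_w$ at any finite sample $\{x_i\}$ are inherited from those of $k$ together with the rank-one factor $w(x_i)w(x_j)$.

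For bullet (3), given $w \cdot g \in \mathcal{C}_w(\Omega)$ and $\varepsilon > 0$, universality of $k$ (Assumption \ref{assum:kernel}) supplies $h \in \mathcal{H}_k(\Omega)$ with $\|g - h\|_{\mathcal{C}(\Omega)} < \varepsilon$; then $w \cdot h \in \mathcal{H}_{k,w}(\Omega)$ and by the defining isometry $\|w \cdot g - w \cdot h\|_{\mathcal{C}_w(\Omega)} = \|g - h\|_{\mathcal{C}(\Omega)} < \varepsilon$. The only non-routine point in the whole proof is the well-definedness issue handled at the outset; once $M_w$ is known to be injective on continuous functions, each bullet becomes a direct transport of a standard fact about $\mathcal{C}(\Omega)$ or $\mathcal{H}_k(\Omega)$ through the isometry $M_w$.
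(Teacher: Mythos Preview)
The paper gives no proof of this proposition at all; it simply declares the three properties ``easy to observe'' and moves on. Your argument via the multiplication isometry $M_w$ is the natural way to fill in those details, and each bullet is handled correctly: completeness by transporting Cauchy sequences through the isometry, the reproducing property of $k_w$ by the one-line computation you wrote, and density by pushing the universality of $k$ through $M_w$.

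The one point you address that the paper glosses over is well-definedness of the norm on $\mathcal{C}_w(\Omega)$: since $w(0)=0$, the map $g\mapsto w\cdot g$ is not injective unless one knows $w>0$ on $\Omega\setminus\{0\}$ and uses continuity of $g$ at the origin. Your appeal to Assumption~\ref{assum:weight} (the ratio $w(f(x))/w(x)$ must be defined) is a reasonable way to extract $w(x)>0$ for $x\neq 0$, and the continuity argument closes the gap at the origin. This is a genuine technical point the paper leaves implicit, so your treatment is in fact more careful than the paper's.
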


\par Now we may define the Koopman operator on the weighted continuous function space. 
\begin{definition}
    The \emph{Koopman operator} refers to the following linear operator 
    $$A: \mathcal{C}_w(\Omega)\rightarrow \mathcal{C}_w(\Omega), \, w\cdot g\mapsto (w\circ f) \cdot (g\circ f). $$
\end{definition}
\begin{proposition}
    Under Assumption \ref{assum:weight}, the Koopman operator
    $A: \mathcal{C}_w(\Omega)\rightarrow \mathcal{C}_w(\Omega)$ is contractive: $\|A\|\leq \alpha$. 
\end{proposition}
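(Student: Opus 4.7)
The plan is to show directly that for any element $w\cdot g \in \mathcal{C}_w(\Omega)$, the image $A(w\cdot g)$ can be rewritten as $w\cdot h$ for a continuous function $h$ whose supremum norm is at most $\alpha$ times that of $g$. Since the norm on $\mathcal{C}_w(\Omega)$ is defined via the unweighted representative, the contraction bound follows.

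Concretely, I would first write $A(w\cdot g)(x) = w(f(x))\, g(f(x))$ and then factor out the weight as
\begin{equation*}
A(w\cdot g)(x) = w(x)\cdot \underbrace{\frac{w(f(x))}{w(x)}\, g(f(x))}_{=:\, h(x)} ,
\end{equation*}
so that $A(w\cdot g) = w\cdot h$. Assumption \ref{assum:weight} then immediately gives $|h(x)| \le \alpha\,|g(f(x))| \le \alpha \|g\|_{\mathcal{C}(\Omega)}$ for every $x$ where $w(x)\ne 0$, hence $\|h\|_{\mathcal{C}(\Omega)}\le \alpha \|g\|_{\mathcal{C}(\Omega)} = \alpha\|w\cdot g\|_{\mathcal{C}_w(\Omega)}$. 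By the definition of the norm on $\mathcal{C}_w(\Omega)$, this gives $\|A(w\cdot g)\|_{\mathcal{C}_w(\Omega)}\le \alpha\|w\cdot g\|_{\mathcal{C}_w(\Omega)}$, and taking the supremum over unit-norm elements yields $\|A\|\le \alpha$.

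The only delicate step is the behavior of $h$ at points where $w(x)=0$, which in particular happens at $x=0$ by Assumption \ref{assum:weight}. The ratio $w(f(x))/w(x)$ is well-defined and bounded by $\alpha$ only off this zero set, so strictly speaking $h$ is defined pointwise only outside $\{w=0\}$. I would handle this by noting that since $w(f(x))\le \alpha w(x)$, the product $w\cdot h$ extends continuously by zero wherever $w$ vanishes, and the bound $|h|\le \alpha\|g\|_\infty$ is uniform on the open set where $w>0$, which is what the norm $\|\cdot\|_{\mathcal{C}_w(\Omega)}$ actually measures (the representative $h$ is determined up to modifications on $\{w=0\}$, and we may pick a continuous extension; the continuous assignment $h(x):=\alpha g(f(x))$ at zeros of $w$, or any value bounded by $\alpha\|g\|_\infty$, suffices). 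This technicality is the main, though minor, obstacle; the rest is a one-line estimate using the multiplicative decay bound in Assumption \ref{assum:weight}.
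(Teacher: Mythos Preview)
Your proposal is correct and follows essentially the same argument as the paper: both write $A(w\cdot g)=(w\circ f)\cdot(g\circ f)$, divide through by $w$, apply the bound $w(f(x))/w(x)\le\alpha$ from Assumption~\ref{assum:weight}, and then bound $|g(f(x))|$ by $\|g\|_{\mathcal{C}(\Omega)}$. The paper simply writes the chain of inequalities in one display without commenting on the zero set of $w$, whereas you are more explicit about extending $h$ continuously over $\{w=0\}$; this extra care is a refinement rather than a different approach.
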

\begin{proof}
    By definition, for any $w\cdot g\in \mathcal{C}_w(\Omega)$
    \begin{align*}
        & \|A(w\cdot g)\|_{\mathcal{C}_w(\Omega)}
        = \|(w\circ f) \cdot (g\circ f)\|_{\mathcal{C}_w(\Omega)} \\
        &\quad = \sup_{x\in \Omega} \frac{w(f(x))|g(f(x))|}{w(x)} \leq \alpha\sup_{x\in \Omega} |g(f(x))| \\
        &\quad \leq \alpha \sup_{y\in\Omega} |g(y)| = \alpha \|w\cdot g\|_{\mathcal{C}_w(\Omega)}.
    \end{align*}
    That is the conclusion to be proved. 
\end{proof}
Therefore, as long as a ``stability certificate'' $w$ is obtained, by restricting the Koopman operator on the weighted continuous function space $\mathcal{C}_w(\Omega)$, its contraction is guaranteed. 
Since $\mathcal{H}_w(\Omega)$ is dense in $\mathcal{C}_w(\Omega)$, with a similar procedure to the one in \S\ref{subsec:approximate}, the Koopman operator can be approximated by a rank-$r$ operator on $\mathcal{H}_w(\Omega)$ from the dataset. Then the main problem of interest is whether such an estimation incurs a bounded error, so that the approximated Koopman operator is capable of predicting the future orbits.

\subsection{Learning Error of Koopman Operators in Weighted RKHS}
In the sequel, we use the following operators:
\begin{align*}
    X &= \mathbb{E}_{x\sim D} \left[ k_w(x, \cdot) \otimes k_w(x, \cdot) \right], \\
    \hat{X} &= \frac{1}{m}\sum_{i=1}^m k_w(x_i, \cdot) \otimes k_w(x_i, \cdot) \\
    Y &= \mathbb{E}_{x\sim D} \left[ k_w(f(x), \cdot) \otimes k_w(f(x), \cdot)\right], \\
    \hat{Y} &= \frac{1}{m}\sum_{i=1}^m k_w(y_i, \cdot) \otimes k_w(y_i, \cdot) \\
    Z &= \mathbb{E}_{x\sim D} \left[ k_w(x, \cdot) \otimes k_w(f(x), \cdot)\right], \\
    \hat{Z} &= \frac{1}{m}\sum_{i=1}^m k_w(x_i, \cdot) \otimes k_w(y_i, \cdot),  
\end{align*}
based on which we can rewrite, for $G \in \mathrm{HS}(\mathcal{H}_w)$: 
\begin{equation}\label{eq:generalization}
\begin{aligned}
    \rho_w(\hat{A}) - \hat{\rho}_w(\hat{A}) = \mathrm{tr}\big[&(Y - \hat{Y}) + \hat{A}\hat{A}^\ast(X - \hat{X}) \\
    & - \hat{A}^\ast(Z-\hat{Z}) - (Z-\hat{Z})^\ast \hat{A} \big], 
\end{aligned}
\end{equation}
where $\rho_w$ and $\hat{\rho}_w$ refer to the risk and empirical risk defined in \eqref{eq:risk} and \eqref{eq:empirical.risk}, respectively, with $k$ replaced by $k_w$. 
We note the following lemma from Pontil and Maurer \cite{pontil2013excess} on the concentration of the average of rank-$1$ operators around the expectation. The subsequent lemma is the well-known Bernstein's inequality from probability theory. With these, the generalization loss (namely the difference between the risk and empirical risk) becomes probabilistically bounded. 
\begin{lemma}\label{lemma:Pontil}
    Suppose that $T_1, \cdots, T_m$ are independently distributed random rank-1 operators, satisfying $\|\mathbb{E}[T_1]\|\leq 1$. Then for $\bar{T} := \frac{1}{m}\sum_{i=1}^m T_i$, for any $\epsilon > 0$: 
    $$\mathbb{P}\left[\|\bar{T} - \mathbb{E}[T_1]\| > \epsilon\right] \leq 4m^2\exp\left(-\frac{m\epsilon^2}{9 + 6\epsilon}\right).$$
\end{lemma}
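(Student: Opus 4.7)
The plan is to prove this concentration bound by applying a noncommutative (matrix) Bernstein inequality specialized to rank-1 summands. The overall strategy has three ingredients: (i) a Hermitian dilation $T_i \mapsto \tilde T_i = \bigl(\begin{smallmatrix} 0 & T_i \\ T_i^\ast & 0 \end{smallmatrix}\bigr)$ that converts each generally non-self-adjoint rank-1 operator into a self-adjoint operator on $\mathcal{H}\oplus\mathcal{H}$ with the same operator norm, so that self-adjoint concentration tools apply; (ii) a matrix moment-generating-function bound in the spirit of Tropp's matrix Bernstein inequality, using Lieb's concavity theorem to decouple independent summands inside the trace exponential; and (iii) a finite-dimensional reduction exploiting the fact that $\sum_{i=1}^m T_i$ has rank at most $m$, so that after dilation the nontrivial action lies in a subspace whose dimension scales like $m$.

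Concretely, I would first center each summand by $\mathbb{E}[T_i]$ and, using the rank-1 structure together with the hypothesis $\|\mathbb{E}[T_1]\|\leq 1$, derive a uniform deviation bound of the form $\|\tilde T_i - \mathbb{E}[\tilde T_i]\| \leq R$ almost surely (with $R$ a small absolute constant) and a variance bound $\|\mathbb{E}[(\tilde T_i - \mathbb{E}[\tilde T_i])^2]\|\leq \sigma^2$. The standard MGF estimate then gives
\[
  \mathbb{E}\bigl[\exp\bigl(\lambda(\tilde T_i - \mathbb{E}[\tilde T_i])\bigr)\bigr] \preceq \exp\!\left(\frac{\lambda^2\, \mathbb{E}[(\tilde T_i - \mathbb{E}[\tilde T_i])^2]}{2(1 - \lambda R/3)}\right).
\]
Taking the trace exponential of the sum, invoking independence via Lieb's inequality, and applying a Markov bound yields a Chernoff tail of the form $(\mathrm{const}\cdot d)\exp(-m\epsilon^2/(2\sigma^2 + 2R\epsilon/3))$, where $d$ is the effective dimension. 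Optimizing the bookkeeping of $R$ and $\sigma^2$ with the normalizations in the hypothesis should produce the specific denominator $9 + 6\epsilon$.

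The hardest step is the dimensional factor $4m^2$. A direct application of matrix Bernstein in a fixed ambient dimension $d$ gives a prefactor $2d$, so to recover $4m^2$ I need to justify working in an effective dimension proportional to $m^2$ rather than $m$. The subtlety is that while $\bar T$ is rank-$m$, the centering operator $\mathbb{E}[T_1]$ need not be low-rank, so $\bar T - \mathbb{E}[T_1]$ does not manifestly live in a finite-dimensional subspace. I would handle this by rewriting $\|\bar T - \mathbb{E}[T_1]\|$ as a supremum of bilinear forms $\langle u, (\bar T - \mathbb{E}[T_1]) v\rangle$ with unit vectors $u,v$ that can be restricted to the $m$-dimensional spans of the ranges of $\{T_i\}$ and $\{T_i^\ast\}$ (after accounting for the $\mathbb{E}[T_1]$ contribution via a separate small-norm truncation), combined with a net argument or direct projection. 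The factor $4 = 2\times 2$ then comes from the dilation plus separation into positive/negative spectral parts, while $m\times m$ accounts for the combined range/corange dimensions of $\sum_i T_i$. I anticipate that the delicate interaction between the Bernstein MGF bound and this dimensional reduction is where the precise constants in the lemma will require the most care.
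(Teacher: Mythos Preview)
The paper does not supply its own proof of this lemma; it is quoted as a known result from Pontil and Maurer \cite{pontil2013excess}. There is therefore nothing in the paper to compare your argument against.

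Regarding the proposal on its own terms: the overall architecture---Hermitian dilation, then a Tropp/Lieb matrix-Bernstein MGF bound, then a dimensional reduction exploiting the rank-one structure---is indeed the standard route to such inequalities and is essentially how the cited reference proceeds. Two points in your sketch are shaky, however. First, the MGF bound you invoke requires an almost-sure bound $\|T_i\|\le R$; the lemma as restated here only hypothesizes $\|\mathbb{E}[T_1]\|\le 1$, which by itself is not enough for a Bernstein-type tail with absolute constants. In the original reference and in the paper's application one actually has $\|T_i\|\le 1$ a.s.\ (since $T_i = k_w(x_i,\cdot)\otimes k_w(x_i,\cdot)$ with $k_w(x,x)\le 1$), so this is more a gap in the lemma's transcription than in your plan, but you should state the hypothesis you are really using. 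Second, your accounting for the $4m^2$ prefactor via ``range $\times$ corange $= m\times m$'' is not how the dimensional constant arises in matrix Bernstein: the prefactor there is the ambient (or effective) dimension, not a product of two rank counts. After dilation the $m$ rank-one summands act nontrivially on a subspace of dimension at most $2m$, which by itself produces only a factor of order $m$, not $m^2$. Recovering the additional factor of $m$ requires a different technical step (in the cited proof it comes from the particular finite-dimensional approximation used to handle the infinite-dimensional setting), so this part of your plan would not go through as written and would need to be reworked.
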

\begin{lemma}\label{lemma:Bernstein}
    Suppose that $\xi_1, \cdots, \xi_m$ are independently distributed random variables, such that $|\xi_1|\leq 1$ almost surely. Then for $\bar{\xi} := \frac{1}{m}\sum_{i=1}^m \xi_i$, for any $\epsilon > 0$: 
    $$\mathbb{P}\left[|\bar{\xi} - \mathbb{E}[\xi_1]| > \epsilon\right] \leq 2\exp\left(-\frac{m\epsilon^2/2}{\mathbb{E}[\xi_1^2] + \epsilon/3}\right).$$
\end{lemma}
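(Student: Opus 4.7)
The plan is to use the classical Chernoff–Bennett argument and obtain the two-sided bound via a union bound, which explains the factor of $2$. Specifically, I would first prove the one-sided inequality $\mathbb{P}[\bar{\xi} - \mathbb{E}[\xi_1] > \epsilon]\le \exp(-m\epsilon^2/(2(\mathbb{E}[\xi_1^2]+\epsilon/3)))$ and then apply the same argument to $-\xi_i$, since $|-\xi_i|\le 1$ as well and $\mathbb{E}[(-\xi_i)^2]=\mathbb{E}[\xi_1^2]$.

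For the one-sided bound, the first step is Chernoff's inequality: for any $t>0$,
\begin{equation*}
\mathbb{P}\!\left[\sum_{i=1}^m (\xi_i-\mathbb{E}[\xi_i]) > m\epsilon\right] \leq e^{-tm\epsilon} \prod_{i=1}^m \mathbb{E}\!\left[e^{t(\xi_i - \mathbb{E}[\xi_i])}\right],
\end{equation*}
where independence factorizes the moment generating function. The second step is a careful bound on the individual MGF using the hypothesis $|\xi_i|\le 1$. I would expand $e^{t\xi_i}$ as a power series and use the moment inequality $\mathbb{E}[\xi_i^k]\le \mathbb{E}[\xi_i^2]$ for $k\ge 2$ (since $|\xi_i|\le 1$), giving
\begin{equation*}
\mathbb{E}[e^{t\xi_i}] \leq 1+t\mathbb{E}[\xi_i] + \mathbb{E}[\xi_i^2](e^{t}-1-t).
\end{equation*}
Combining this with Bennett's auxiliary inequality $e^{u}-1-u\le u^2/(2(1-u/3))$ valid for $0<u<3$, and then using $1+x\le e^x$, I obtain
\begin{equation*}
\mathbb{E}\!\left[e^{t(\xi_i-\mathbb{E}[\xi_i])}\right] \leq \exp\!\left(\frac{t^2\mathbb{E}[\xi_1^2]}{2(1-t/3)}\right),\quad 0<t<3.
\end{equation*}

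The third step is substitution and optimization. Plugging the MGF bound back into the Chernoff estimate yields $\exp(-tm\epsilon + m t^2 \mathbb{E}[\xi_1^2]/(2(1-t/3)))$; I would then choose $t = \epsilon/(\mathbb{E}[\xi_1^2]+\epsilon/3)$, which is the minimizer (or at least a near-optimizer) and lies in $(0,3)$, to recover the stated rate $-m\epsilon^2/(2(\mathbb{E}[\xi_1^2]+\epsilon/3))$. Applying the symmetric argument to $-\xi_i$ and taking the union bound doubles the right-hand side. The main technical obstacle is the MGF bound: one needs both the bounded-moment inequality (to replace higher moments by $\mathbb{E}[\xi_1^2]$, which avoids a separate variance computation) and the Bennett inequality for $e^u-1-u$; once these are in place, the optimization over $t$ is a short calculus exercise, and the rest of the proof is essentially bookkeeping.
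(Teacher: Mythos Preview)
Your argument is the standard Chernoff--Bennett derivation of Bernstein's inequality and is essentially correct. One phrasing nit: the claim ``$\mathbb{E}[\xi_i^k]\le \mathbb{E}[\xi_i^2]$ for $k\ge 2$'' is not literally what you use; what you actually need (and what holds) is the pointwise bound $|\xi_i|^k\le \xi_i^2$ for $k\ge 2$ when $|\xi_i|\le 1$, applied termwise to the series for $e^{t\xi_i}-1-t\xi_i$. With that correction your MGF bound, the inequality $e^{u}-1-u\le u^{2}/(2(1-u/3))$ (which follows by comparing power series term by term, since $k!\ge 2\cdot 3^{k-2}$ for $k\ge 2$), and the choice $t=\epsilon/(\mathbb{E}[\xi_1^2]+\epsilon/3)\in(0,3)$ give exactly the stated exponent; the union bound over $\pm\xi_i$ then yields the factor $2$.

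As for comparison with the paper: there is nothing to compare against. The paper does not prove this lemma; it merely states it as ``the well-known Bernstein's inequality from probability theory'' and uses it as a black box in the proof of Theorem~\ref{th:generalization}. Your write-up therefore supplies a self-contained justification that the paper omits.
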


\begin{theorem}\label{th:generalization}
    Under Assumptions \ref{assum:dynamics}, \ref{assum:weight}, and \ref{assum:kernel}, with independent sample $\{(x_i, y_i = f(x_i))\}_{i=1}^m$ from probability distribution $\mathcal{P}$ supported on $\Omega$, the following inequality holds with probability at least $1-\delta$:
    \begin{equation}\label{eq:generation-bound}
        \rho_w(\hat{A}) \leq \hat{\rho}_w(\hat{A}) + \check{\rho}_w(m, \gamma, r, \delta) 
    \end{equation}
    in which
    \begin{align*}
        & \check{\rho}_w(m, \gamma, r, \delta):= \frac{1}{m}\log\frac{6}{\delta} + \sqrt{\frac{8}{m}\log \frac{6}{\delta}} \\
        & + \gamma\left(\gamma+2\sqrt{r} \right)\left(\frac{6}{m}\log \frac{12m^2}{\delta} + \sqrt{\frac{9}{m}\log\frac{12m^2}{\delta}}\right),      
    \end{align*}
    for any $\hat{A}\in \mathrm{HS}(\mathcal{H}_w)$ such that $\|\hat{A}\|_{\mathrm{HS}}\leq \gamma$ and $\mathrm{rank}\, \hat{A}\leq r$. 
\end{theorem}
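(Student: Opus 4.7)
My plan is to start from the exact decomposition \eqref{eq:generalization}, apply the triangle inequality to the four traces, and bound each one separately. For the purely data-dependent trace $\mathrm{tr}(Y-\hat Y)$ I will invoke Bernstein's inequality (Lemma \ref{lemma:Bernstein}); for the two traces involving $\hat A$, I will combine a H\"older-type trace inequality $|\mathrm{tr}(CD)|\le \|C\|_1\|D\|$ with the rank and Hilbert--Schmidt bounds on $\hat A$, thereby reducing everything to operator-norm concentration of $X-\hat X$ and $Z-\hat Z$, which I will handle via Lemma \ref{lemma:Pontil}. A union bound over the three events then collects the individual pieces into $\check\rho_w$.

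\emph{Bounding each trace.} The first trace $\mathrm{tr}(Y-\hat Y)$ is the scalar deviation $\bar\xi - \mathbb E[\xi_1]$ for $\xi_i := k_w(y_i,y_i) = \|k_w(y_i,\cdot)\|^2 \in [0,1]$ (using $\sup_x k_w(x,x)\le 1$ from Assumption \ref{assum:kernel}); Lemma \ref{lemma:Bernstein} with $\mathbb E[\xi_i^2]\le 1$ at confidence $\delta/3$ gives exactly the first line of $\check\rho_w$ after one standard inversion. For the remaining traces I observe that $\hat A\hat A^\ast$ is positive with rank $\le r$, so
\begin{equation*}
\|\hat A\hat A^\ast\|_1 \;=\; \mathrm{tr}(\hat A\hat A^\ast) \;=\; \|\hat A\|_{\mathrm{HS}}^2 \;\le\; \gamma^2,
\end{equation*}
while $\hat A^\ast$ has rank $\le r$ and therefore $\|\hat A^\ast\|_1 \le \sqrt r\,\|\hat A\|_{\mathrm{HS}} \le \sqrt r\,\gamma$. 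H\"older then yields
\begin{align*}
|\mathrm{tr}(\hat A\hat A^\ast(X-\hat X))| &\le \gamma^{2}\,\|X-\hat X\|, \\
|\mathrm{tr}(\hat A^\ast(Z-\hat Z))| + |\mathrm{tr}((Z-\hat Z)^\ast\hat A)| &\le 2\sqrt{r}\,\gamma\,\|Z-\hat Z\|.
\end{align*}

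\emph{Operator-norm concentration and union bound.} The rank-one summands in $\hat X$ and $\hat Z$ have operator norm at most $1$, and their expectations inherit the same bound by Jensen, so the hypothesis $\|\mathbb E[T_1]\|\le 1$ of Lemma \ref{lemma:Pontil} is met in each case. Applying the lemma separately to $\hat X-X$ and $\hat Z-Z$, each at confidence $\delta/3$, and solving $4m^2\exp(-m\epsilon^2/(9+6\epsilon))=\delta/3$ for $\epsilon$ (splitting the quadratic by $\sqrt{a+b}\le\sqrt a+\sqrt b$), I obtain the common bound $\eta := (6/m)\log(12m^2/\delta) + \sqrt{(9/m)\log(12m^2/\delta)}$ on both $\|X-\hat X\|$ and $\|Z-\hat Z\|$. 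Combining the H\"older estimates with $\eta$ gives the $\hat A$-dependent contribution $(\gamma^2 + 2\sqrt r\,\gamma)\eta = \gamma(\gamma+2\sqrt r)\eta$, and a union bound over the three events (Bernstein plus two Pontil applications, each at level $\delta/3$) yields $\check\rho_w(m,\gamma,r,\delta)$ with confidence at least $1-\delta$.

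\emph{Main obstacle.} The crux of the argument is the Schatten-norm accounting for the two $\hat A$-dependent terms: the quadratic combination $\hat A\hat A^\ast$ is positive and its trace norm collapses to $\|\hat A\|_{\mathrm{HS}}^2$ with no $\sqrt r$ factor, whereas the linear factor $\hat A^\ast$ in the cross term must be bounded via $\|\cdot\|_1 \le \sqrt r\,\|\cdot\|_{\mathrm{HS}}$. It is this asymmetry that produces the factor $\gamma(\gamma+2\sqrt r)$ rather than, say, $(\gamma+\sqrt r)^2$. The remaining work is routine inversion of the two tail inequalities, being careful that Lemma \ref{lemma:Pontil} is stated in the operator norm, so that the pointwise bound $k_w(x,x)\le 1$ suffices and no higher moment of $\|k_w(x,\cdot)\|$ is needed.
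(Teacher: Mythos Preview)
Your proposal is correct and follows essentially the same route as the paper: relax \eqref{eq:generalization} via the trace/H\"older bounds to $\mathrm{tr}(Y-\hat Y)+\gamma^2\|X-\hat X\|+2\sqrt{r}\,\gamma\|Z-\hat Z\|$, then apply Lemma~\ref{lemma:Bernstein} to the first term and Lemma~\ref{lemma:Pontil} to the other two, each at level $\delta/3$, and invert the tail bounds. Your Schatten-norm justification for the factor $\gamma(\gamma+2\sqrt r)$ is in fact more explicit than what the paper writes; the only minor slip is that $\sup_x k_w(x,x)\le 1$ requires $w$ to be bounded by $1$ in addition to Assumption~\ref{assum:kernel}, a normalization the paper also uses implicitly.
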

\begin{proof}
    The right-hand side of \eqref{eq:generalization} can be relaxed to $\mathrm{tr}[Y - \hat{Y}] + \gamma^2 \|X - \hat{X}\| + 2\sqrt{r}\gamma \|Z - \hat{Z}\|$. By Lemma \ref{lemma:Pontil}, to ensure the probability of
    $\|X - \hat{X}\| > \epsilon_X$ not to exceed $\delta/3$, we need 
    $$4m^2 \exp\left(-\frac{m\epsilon_X^2}{9 + 6\epsilon_X}\right) \leq \frac{\delta}{3}.$$
    Solving the above inequality, an overestimation is
    $$\epsilon_X = \frac{6}{m}\log \frac{12m^2}{\delta} + 3\sqrt{\frac{1}{m}\log\frac{12m^2}{\delta}}. $$
    Similarly, the probability of $\|Z - \hat{Z}\|>\epsilon_Z$ does not exceed $\delta/3$ for $\epsilon_Z = \epsilon_X$. 
    Then, according to Lemma \ref{lemma:Bernstein}, in order that the probability of $\mathrm{tr}[Y - \hat{Y}] > \epsilon_Y$ does not exceed $\delta/3$, we need
    $$2\exp\left(-\frac{m\epsilon_Y^2/2}{1 + \epsilon_Y/3}\right) \leq \frac{\delta}{3}.$$
    An overestimation is
    $$\epsilon_Y = \frac{1}{m}\log\frac{6}{\delta} + \sqrt{\frac{8}{m}\log \frac{6}{\delta}}.$$
    The conclusion is thus proved. 
\end{proof}
A direct implication of such a bound in generalization loss is a bound on expected prediction error when applying $\hat{A}$ on any observable on $\mathcal{C}_w(\Omega)$. 
\begin{corollary}
    Under the same conditions as in Theorem \ref{th:generalization}, for any $g\in \mathcal{C}(\Omega)$, with probability at least $1-\delta$:
    \begin{equation}
    \begin{aligned}
        \mathbb{E}_{x\sim \mathcal{P}}\left[\big{|} \hat{A}(w\cdot g)(x) - w(f(x))g(f(x)) \big{|}^2 \right] \leq \\ \left(\hat{\rho}_w(\hat{A}) + \check{\rho}_w(m, \gamma, r, \delta)\right) \|g\|_{C(\Omega)}^2.
    \end{aligned}
    \end{equation}
\end{corollary}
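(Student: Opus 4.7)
The plan is to reduce the pointwise prediction error to a Cauchy--Schwarz estimate against the kernel residual $\phi_x := \hat{A}^{\ast} k_w(x,\cdot) - k_w(f(x),\cdot) \in \mathcal{H}_{k,w}$, whose $\mathcal{P}$-averaged squared norm equals $\rho_w(\hat{A})$, then to invoke Theorem~\ref{th:generalization}.

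The first step uses the reproducing property of the weighted kernel, which gives $\hat{A}h(x) = \langle h, \hat{A}^{\ast} k_w(x,\cdot)\rangle_{\mathcal{H}_{k,w}}$ and $h(f(x)) = \langle h, k_w(f(x),\cdot)\rangle_{\mathcal{H}_{k,w}}$ for any $h \in \mathcal{H}_{k,w}$. Specialising to $h = w\cdot g$ yields the identity
\begin{equation*}
\hat{A}(w\cdot g)(x) - w(f(x))g(f(x)) = \langle w\cdot g,\, \phi_x\rangle_{\mathcal{H}_{k,w}}.
\end{equation*}
Squaring, applying Cauchy--Schwarz, integrating against $\mathcal{P}$, and invoking Theorem~\ref{th:generalization} then gives, with probability at least $1-\delta$,
\begin{equation*}
\mathbb{E}_{x\sim\mathcal{P}}\!\left[|\hat{A}(w\cdot g)(x) - w(f(x))g(f(x))|^2\right] \leq \|w\cdot g\|_{\mathcal{H}_{k,w}}^2\bigl(\hat{\rho}_w(\hat{A}) + \check{\rho}_w(m,\gamma,r,\delta)\bigr),
\end{equation*}
where $\|w\cdot g\|_{\mathcal{H}_{k,w}}^2 = \|g\|_{\mathcal{H}_k(\Omega)}^2$ by the definition of the weighted RKHS inner product.

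The main obstacle is then converting this $\|g\|_{\mathcal{H}_k}^2$ factor into the stated $\|g\|_{C(\Omega)}^2$: Assumption~\ref{assum:kernel} with $\sup_{x} k(x,x) \leq 1$ only provides the continuous embedding $\|g\|_{C(\Omega)} \leq \|g\|_{\mathcal{H}_k}$, which is the reverse of what is needed. I would handle this by either (i) reinterpreting the corollary with $\|g\|_{\mathcal{H}_k(\Omega)}$ in place of $\|g\|_{C(\Omega)}$, which suffices for the downstream Lyapunov and Zubov estimates where the observables live in the RKHS by construction; or (ii) using the representer expansion $\hat{A} = \sum_{i,j}\theta_{ij}\,k_w(x_i,\cdot)\otimes k_w(y_j,\cdot)$ to write the error as $e(x;g) = \int g\,d\nu_x$ for a signed measure $\nu_x$ supported on $\{y_1,\ldots,y_m, f(x)\}$, extracting $\|g\|_{C(\Omega)}$ via total variation, and then showing that $\mathbb{E}\|\nu_x\|_{\mathrm{TV}}^2$ admits a Pontil--Bernstein bound of the same form $\hat{\rho}_w + \check{\rho}_w$. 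The second route is more delicate but matches the conclusion as stated.
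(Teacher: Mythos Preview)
The paper states this corollary without proof, so there is no paper argument to compare against. Your Cauchy--Schwarz route via the reproducing property of $k_w$ is exactly the intended one-line derivation: writing the pointwise error as $\langle w\cdot g,\,\phi_x\rangle_{\mathcal{H}_{k,w}}$, squaring, and averaging gives $\rho_w(\hat{A})\|w\cdot g\|_{\mathcal{H}_{k,w}}^2 = \rho_w(\hat{A})\|g\|_{\mathcal{H}_k}^2$, after which Theorem~\ref{th:generalization} is invoked. That part is correct and complete.

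You have also correctly located a genuine gap in the corollary as stated. The bound one actually obtains carries the factor $\|g\|_{\mathcal{H}_k(\Omega)}^2$, and under Assumption~\ref{assum:kernel} one only has $\|g\|_{C(\Omega)}\le\|g\|_{\mathcal{H}_k}$, not the reverse; moreover, for a generic $g\in\mathcal{C}(\Omega)\setminus\mathcal{H}_k(\Omega)$ the expression $\hat{A}(w\cdot g)$ is not even defined, since $\hat{A}\in\mathrm{HS}(\mathcal{H}_{k,w})$. Your option~(i) --- restricting to $g\in\mathcal{H}_k$ and replacing $\|g\|_{C(\Omega)}$ by $\|g\|_{\mathcal{H}_k}$ --- is the clean fix and is what the argument genuinely proves. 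Note that the same slippage recurs in the paper's proof of the Zubov bound, where $\|w\cdot g\|_{\mathcal{C}_w}$ appears after a Cauchy--Schwarz step that should produce $\|w\cdot g\|_{\mathcal{H}_{k,w}}$.

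Your option~(ii) is not a viable rescue of the statement as written. Even extending $\hat{A}$ to $\mathcal{C}_w(\Omega)$ via the finite representer formula and bounding the error by $\|g\|_{C(\Omega)}\|\nu_x\|_{\mathrm{TV}}$, the quantity $\mathbb{E}\|\nu_x\|_{\mathrm{TV}}^2$ is not $\rho_w(\hat{A})$; it is an $\ell^1$-type functional of the coefficients $\sum_i\theta_{ij}k_w(x_i,x)$ rather than the $\mathcal{H}_{k,w}$-norm of $\phi_x$, so the Pontil--Maurer and Bernstein bounds used in Theorem~\ref{th:generalization} do not apply to it and would not reproduce $\hat{\rho}_w+\check{\rho}_w$. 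Stick with option~(i).
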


\subsection{Extension to Zubov-Koopman Operators}
In Meng et al. \cite{meng2023learning}, the concept of Zubov-Koopman operator was proposed to characterize the DoA when it does not contain the entire $\Omega$. The definition (for a discrete-time system) was given as $Z: g\mapsto e^{-\eta(\cdot)}g(f(\cdot))$ where $\eta$ is a positive definite function that assigns a cost to any state on $\Omega$. 
Thus, $Z^t: g\mapsto \exp\left(-\sum_{s=0}^{t-1} \cdot \eta\circ f^s\right) \cdot (g\circ f^t)$. It was pointed out in \cite{meng2023learning} that by letting $g$ have a value equal to $1$ at the origin (e.g., $g\equiv 1$ and $t\rightarrow\infty$), $Z^\infty g = \exp(-\sum_{s=1}^\infty \eta\circ f^s(\cdot)) =: \zeta$ becomes a \emph{Zubov function}, which has a support on the DoA, since all points not in DoA will have an infinite accumulated cost on the exponent. (Strictly speaking, $1-\zeta$ is the actual ``Zubov-Lyapunov function'' whose value is below $1$ on the DoA, while $\zeta$ can be called as a ``Zubov dual function''. For simplicity, we refer to $\zeta$ as the Zubov function.)
However, given a dataset $\{(x_i,y_i)\}_{i=1}^m$, the estimated $\hat{Z}$ contains error and does not guarantee that $\hat{Z}^\infty$ approximates $Z^\infty$ well. 
Therefore, here we define the Zubov-Koopman operator on $\mathcal{C}_w(\Omega)$ with a weighting function $w: \Omega\rightarrow \mathbb{R}_+$. 
\begin{definition}
    The Zubov-Koopman operator $Z: \mathcal{C}_w(\Omega) \rightarrow \mathcal{C}_w(\Omega)$ is such that $w\cdot g\mapsto e^{-\eta(\cdot)}\cdot(w\circ f)\cdot (g\circ f)$. 
\end{definition}
Since the DoA is an open set, it cannot be guaranteed that $w(f(x))/w(x)\leq \alpha$ always hold for an $\alpha\in(0, 1)$. Instead, we assume that it is possible to find an $\eta$ that compensates the locality of DoA. 
\begin{proposition}\label{prop:Zubov-bound}
    Suppose that $\eta$ is chosen as such that for some $\alpha\in (0, 1)$, $e^{-\eta(x)} w(f(x)) \leq \alpha w(x)$ holds for all $x$. Then $Z: \mathcal{C}_w(\Omega) \rightarrow \mathcal{C}_w(\Omega)$ satisfies $\|Z\|\leq \alpha$. 
\end{proposition}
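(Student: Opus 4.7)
The plan is to directly mimic the computation used for the ordinary Koopman operator contraction proof, which also relied on a pointwise decay ratio $w(f(x))/w(x)\leq\alpha$. The only twist is that here the required decay condition is relaxed by inserting the compensating factor $e^{-\eta(x)}$, so the hypothesis $e^{-\eta(x)}w(f(x))\leq \alpha w(x)$ plays exactly the role that Assumption \ref{assum:weight} plays in the earlier argument.

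Concretely, I would take an arbitrary element $w\cdot g\in \mathcal{C}_w(\Omega)$ and expand the definition of $Z$, obtaining $Z(w\cdot g)(x) = e^{-\eta(x)}w(f(x))g(f(x))$. Since the $\mathcal{C}_w(\Omega)$-norm is defined by dividing out the weight, I would write
\begin{align*}
\|Z(w\cdot g)\|_{\mathcal{C}_w(\Omega)} &= \sup_{x\in\Omega}\frac{|e^{-\eta(x)}w(f(x))g(f(x))|}{w(x)}\\
&\leq \sup_{x\in\Omega}\frac{e^{-\eta(x)}w(f(x))}{w(x)}\,|g(f(x))|,
\end{align*}
and then apply the hypothesis to bound the prefactor by $\alpha$ uniformly in $x$. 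Taking the sup over $f(x)\in\Omega$ dominated by the sup over $y\in\Omega$ recovers $\|g\|_{\mathcal{C}(\Omega)}=\|w\cdot g\|_{\mathcal{C}_w(\Omega)}$, giving $\|Z(w\cdot g)\|_{\mathcal{C}_w(\Omega)}\leq \alpha\|w\cdot g\|_{\mathcal{C}_w(\Omega)}$.

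There is no real obstacle here; the argument is a one-line reduction to the already-proven Koopman contraction bound once the hypothesis is invoked. The only subtlety worth flagging is the well-definedness of the ratio $e^{-\eta(x)}w(f(x))/w(x)$ at points where $w(x)=0$ (such as the origin, by Assumption \ref{assum:weight}). This is handled by noting that the hypothesis itself forces $e^{-\eta(x)}w(f(x))=0$ whenever $w(x)=0$, so the ratio extends continuously by zero and the supremum bound is unaffected. With that caveat addressed, taking the supremum over all unit-norm $w\cdot g$ yields $\|Z\|\leq\alpha$ as claimed.
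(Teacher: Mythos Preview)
Your proposal is correct and follows exactly the approach the paper intends: the paper states Proposition~\ref{prop:Zubov-bound} without proof because it is a direct analogue of the contraction proof for the ordinary Koopman operator (Proposition~3), and your argument reproduces that computation line-for-line with the compensating factor $e^{-\eta(x)}$ inserted. The remark on the $w(x)=0$ case is a reasonable clarification, though the paper itself does not dwell on it.
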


\par It is not hard to find that the dual operator of the Zubov-Koopman operator satisfies 
$$Z^\ast k_w(x, \cdot) = e^{-\eta(x)} k_w(f(x), \cdot).$$
Hence in a similar way, with an independent sample $\{(x_i,y_i=f(x_i))\}_{i=1}^m$ from probability distribution $\mathcal{P}$, the Zubov-Koopman operator is estimated on the weighted RKHS by minimizing the following empirical loss:
\begin{equation}
    \hat{\rho}_w(\hat{Z}) = \frac{1}{m} \sum_{i=1}^m \left\| Z^\ast k_w(x, \cdot) - e^{-\eta(x)} k_w(f(x), \cdot) \right\|^2. 
\end{equation}
The RRR formulation described in \S\ref{subsec:approximate} can be adopted by only modifying the definition of $L = [e^{-\eta(x_i)}e^{-\eta(x_j)}k_w(y_i, y_j)]$. The same bound on the risk can be derived as in Theorem \ref{th:generalization}. 

\par Next, with bounded-error estimations of the Koopman and Zubov-Koopman operators, we proceed with the estimations of Lyapunov and Zubov functions, respectively, for globally and locally asymtotically stable systems.

\section{\textsc{Estimation of Lyapunov and Zubov Functions}}
\subsection{Estimation of Lyapunov Function}
Suppose that the system \eqref{eq:system} is globally asymptotically stable (i.e., that the DoA contains the entire $\Omega$), then a Lyapunov function can be found as a certificate of attraction. Based on the estimated Koopman operator $\hat{A}$ on the weighted RKHS, a Lyapunov function can be estimated in a ``kernel-quadratic form'' as 
\begin{equation}
    \hat{v}(x) = \langle k_w(x,\cdot), \hat{P}k_w(x, \cdot)\rangle, \enskip \hat{A}\hat{P}\hat{A}^\ast - \hat{P} = -Q, 
\end{equation}
in which $\hat{P}$ is self-adjoint and $Q$ is a given bounded, self-adjoint, and positive operator. In order that the solution to the above operator Lyapunov equation uniquely exists, we shall assume that $\hat{A}$ is contractive on the weighted RKHS $\mathcal{H}_w$ in a similar manner to the true Koopman operator on $\mathcal{C}_w$. If so, then we have 
\begin{equation}\label{eq:Lyapunov-operator-estimated}
    \hat{P} = \sum_{t=0} \hat{A}^t Q(\hat{A}^\ast)^t. 
\end{equation}

\begin{assumption}\label{assum:Koopman-bound}
    The estimated Koopman operator $\hat{A}$ is such that $\|\hat{A}\|<1$. 
\end{assumption}
In view of \eqref{eq:estimated-Koopman-bound}, an upper bound of $\|\hat{A}\|$ is $\lambda_{\max}(L)/m\beta$. It is known that as $m\rightarrow\infty$, $\frac{1}{m}\lambda_{\max}(L)$ converges in probability to the largest eigenvalue of an integral operator \cite{koltchinskii2000random, braun2006accurate}. Hence, the assumption above is satisfied with high probability when $\beta$ is large, although, given the conservativeness of \eqref{eq:estimated-Koopman-bound}, $\beta$ does not necessarily need to be as large as the inequality suggests. 

The \emph{actual} Lyapunov function $v$ that satisfies 
$$v(f(x)) - v(x) = \langle k_w(x, \cdot), Qk_w(x, \cdot)\rangle, $$
however, should be 
$$ v(x) = \sum_{t=0}^\infty \langle k_w(x[t], \cdot), Qk_w(x[t], \cdot)\rangle, \enskip x[0] = x.$$
Since $k_w(x[t], \cdot) = (A^\ast)^t k_w(x, \cdot)$, we have 
\begin{equation}\label{eq:Lyapunov-operator}
    v(x) = \langle k_w(x, \cdot), Pk_w(x, \cdot)\rangle, \enskip P = \sum_{t=0}^\infty A^t Q (A^\ast)^t .
\end{equation}
Here due to the contractiveness of $A$, $P$ becomes a bounded, self-adjoint, and positive operator on $\mathcal{C}_w$ as the completion of $\mathcal{H}_w$. Its definition can be naturally extended from $\mathcal{H}_w$. 
\begin{theorem}\label{th:Lyapunov}
    Under the afore-mentioned assumptions, 
    \begin{equation}\label{eq:Lyapunov-bound}
        \mathbb{E}_{x\sim \mathcal{P}} \left[|\hat{v}(x)-v(x)| \right] \leq \frac{2\bar{\alpha} \|Q\|}{(1-\bar{\alpha}^2)^2} \rho_w(\hat{A})^{1/2}, 
    \end{equation}
    in which $\bar{\alpha}:=\max\{\|\hat{A}\|, \|A\|\}$. 
\end{theorem}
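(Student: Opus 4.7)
The plan is to express the pointwise error $\hat v(x)-v(x)$ as a bilinear form in the weighted kernel section $k_w(x,\cdot)$, expand it into a sum of one-step Koopman residuals propagated by the dynamics, and take expectations so that the residual expression collapses to $\rho_w(\hat A)^{1/2}$.

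First, I would set $\hat\phi_t:=(\hat A^\ast)^t k_w(x,\cdot)$ and $\phi_t:=(A^\ast)^t k_w(x,\cdot)=k_w(f^t(x),\cdot)$, where the second equality uses the characterization $A^\ast k_w(y,\cdot)=k_w(f(y),\cdot)$ recalled in \S\ref{subsec:approximate}. Substituting the series \eqref{eq:Lyapunov-operator-estimated} and \eqref{eq:Lyapunov-operator} into $\hat v(x)$ and $v(x)$ yields, after invoking self-adjointness of $Q$,
\[
\hat v(x)-v(x)=\sum_{t=0}^{\infty}\langle\hat\phi_t-\phi_t,\,Q(\hat\phi_t+\phi_t)\rangle.
\]
Cauchy--Schwarz together with the uniform bounds $\|\hat\phi_t\|,\|\phi_t\|\le\bar\alpha^t\|k_w(x,\cdot)\|$, which follow from $\|\hat A\|,\|A\|\le\bar\alpha$, gives $|\hat v(x)-v(x)|\le 2\|Q\|\,\|k_w(x,\cdot)\|\sum_t\bar\alpha^t\|\hat\phi_t-\phi_t\|$.

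Second, I would define the one-step residual $r(y):=\hat A^\ast k_w(y,\cdot)-k_w(f(y),\cdot)$, so that $\mathbb E_{y\sim\mathcal P}[\|r(y)\|^2]=\rho_w(\hat A)$ by the very definition of the risk. The error dynamics $\hat\phi_{t+1}-\phi_{t+1}=\hat A^\ast(\hat\phi_t-\phi_t)+r(f^t(x))$ unrolls to $\|\hat\phi_t-\phi_t\|\le\sum_{s<t}\bar\alpha^{t-1-s}\|r(f^s(x))\|$. Swapping the order of summation and evaluating the inner geometric series in $t$ reduces the pointwise error to
\[
|\hat v(x)-v(x)|\le\frac{2\bar\alpha\|Q\|\,\|k_w(x,\cdot)\|}{1-\bar\alpha^2}\sum_{s=0}^{\infty}\bar\alpha^s\|r(f^s(x))\|.
\]
Taking $\mathbb E_{x\sim\mathcal P}$, using Jensen's inequality $\mathbb E[\|r(f^s(x))\|]\le\mathbb E[\|r(f^s(x))\|^2]^{1/2}$, and closing the remaining geometric series $\sum_s\bar\alpha^{2s}=1/(1-\bar\alpha^2)$ produces the announced constant $2\bar\alpha\|Q\|/(1-\bar\alpha^2)^2$.

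The main obstacle lies in this last identification: $\rho_w(\hat A)$ is an expectation against $\mathcal P$, whereas the propagated residuals are evaluated at the iterates $f^s(x)$ whose law is the pushforward $(f^s)_\ast\mathcal P$. This mismatch is reconciled through Assumption~\ref{assum:weight}: the contraction $\|k_w(f^s(x),\cdot)\|\le\alpha^s\|k_w(x,\cdot)\|$ (valid since $k(x,x)\le 1$) forces the weighted feature map to shrink geometrically along trajectories, allowing $\mathbb E_{x\sim\mathcal P}[\|r(f^s(x))\|^2]$ to be dominated by $\bar\alpha^{2s}\rho_w(\hat A)$ and thereby absorbed into the summation. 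Once this dominance is established, the rest of the argument is algebraic manipulation of geometric series.
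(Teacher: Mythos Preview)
Your decomposition is sound up to the point where you obtain residuals $r(f^s(x))$ at the iterated points, and you correctly identify the obstacle: $\rho_w(\hat A)$ averages $\|r(\cdot)\|^2$ over $\mathcal P$, whereas your bound needs it over the pushforwards $(f^s)_\ast\mathcal P$. Your proposed resolution, however, does not close this gap. The contraction $\|k_w(f^s(x),\cdot)\|\le\alpha^s\|k_w(x,\cdot)\|$ only says that the \emph{input} to $(\hat A-A)^\ast$ has shrunk; it yields at best $\|r(f^s(x))\|\le\|\hat A-A\|\,\alpha^s\|k_w(x,\cdot)\|$, which involves the \emph{operator norm} $\|\hat A-A\|$, not the risk $\rho_w(\hat A)^{1/2}$. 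These quantities are not comparable: $\rho_w(\hat A)$ is an $L^2(\mathcal P)$ average of $\|(\hat A-A)^\ast k_w(x,\cdot)\|^2$ and can be arbitrarily small even when $\|\hat A-A\|$ is of order one (imagine $\hat A$ matching $A$ well on the bulk of $\mathcal P$ but poorly near the attractor, precisely where $f^s(x)$ eventually concentrates). The claimed dominance $\mathbb E[\|r(f^s(x))\|^2]\le\bar\alpha^{2s}\rho_w(\hat A)$ is therefore unjustified, and \eqref{eq:Lyapunov-bound} does not follow.

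The paper avoids this difficulty by telescoping at the \emph{operator} level rather than along trajectories: it writes $\hat P-P=(\hat A-A)R+R^\ast(\hat A-A)^\ast$ with $\|R\|\le\bar\alpha\|Q\|/(1-\bar\alpha^2)^2$, so that in
\[
\hat v(x)-v(x)=2\langle(\hat A-A)^\ast k_w(x,\cdot),\,Rk_w(x,\cdot)\rangle
\]
the factor $(\hat A-A)^\ast$ is applied \emph{once}, and \emph{directly} to $k_w(x,\cdot)$ with $x\sim\mathcal P$. Taking expectations then produces exactly $\mathbb E[\|(\hat A-A)^\ast k_w(x,\cdot)\|]\le\rho_w(\hat A)^{1/2}$; all the iterates of $\hat A$ and $A$ are absorbed into the operator-norm bound on $R$. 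The structural move you are missing is to arrange the telescoping so that $(\hat A-A)^\ast$ sits on the outside and only ever sees the sample point $x$, not its forward orbit.
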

\begin{proof}
    Note that for any $t\in \mathbb{N}\backslash\{0\}$:
    $$\hat{A}^t Q(\hat{A}^\ast)^t - A^t Q(A^\ast)^t = (\hat{A} - A) R_t + R_t^\ast (\hat{A} - A)^\ast,$$
    where $R_t$ is a polynomial of $\hat{A}$ and $A$, with $\|R_t\| \leq t \|Q\|\bar{\alpha}^{2t-1}$. Thus, in view of \eqref{eq:Lyapunov-operator-estimated} and \eqref{eq:Lyapunov-operator}, 
    $$\hat{P}-P = (\hat{A} - A) R + R^\ast (\hat{A} - A)^\ast, $$
    where $R :=\sum_{t=1}^\infty R_t$ satisfies 
    $$\|R\| \leq \|Q\|\sum_{t=1}^\infty t \bar{\alpha}^{2t-1} = \frac{\bar{\alpha}\|Q\|}{(1-\bar{\alpha}^2)^2}. $$ 
    Thus, 
    \begin{align*}
    \text{L.H.S. of }\eqref{eq:Lyapunov-bound} & =  \mathbb{E}_{x\sim \mathcal{P}}\left[\big{|} 2\langle (\hat{A} - A)^\ast k_w(x, \cdot), Rk_w(x, \cdot)\rangle \big{|}\right] \\
    & \leq \frac{2\bar{\alpha}\|Q\|}{(1-\bar{\alpha}^2)^2} \mathbb{E}_{x\sim \mathcal{P}}\left[ \| (\hat{A} - A)^\ast k_w(x, \cdot) \| \right].
    \end{align*}
    By Cauchy-Schwarz inequality, the conclusion is proved. 
\end{proof}
According to Theorem \ref{th:Lyapunov}, with a sample size $m$ and confidence $1-\delta$, the mean-squared estimation error of the Lyapunov function is probabilistic bounded by $2\bar{\alpha}(1-\bar{\alpha}^2)^{-2}\|Q\|$ multiplied by the right-hand side of \eqref{eq:generation-bound}. 

\par Here we examine the practical performance of such an approximation scheme with a numerical case study. 

\begin{example}
    We consider the following system \cite{tang2024data}:
    \begin{equation}
        \dot{x}_1 = -3x_1 + x_2 + \frac{1}{2\pi}\sin 2\pi x_1, \enskip 
        \dot{x}_2 = x_1 - x_2. 
    \end{equation}
    The system has a globally stable equilibrium point at the origin (which can be shown using quadratic Lyapunov function if the model is known). Without an available model, assuming that the knowledge about the exponential decay of $w(\cdot) = \|\cdot\|$ is known, we sample $m=500$ points under a uniform distribution over $\Omega = \{x\in\mathbb{R}^2: \|x\|\leq 2\}$ with a discretization time of $0.05$, in order to learn the Koopman operator on $\mathcal{H}_{k, w}$ with Gaussian kernel $k(x, y) = \exp(-4\|x - y\|^2)$. In the learning, the regularization $\beta$ is chosen to be $1/100$ of the largest eigenvalue of $\frac{1}{m}K$, and the rank of $\hat{A}$ is chosen as $r=50$. The proposed method is then used to construct the Lyapunov function with $Q$ being the identity map. 
    \begin{figure}
        \centering
        \includegraphics[width=\columnwidth]{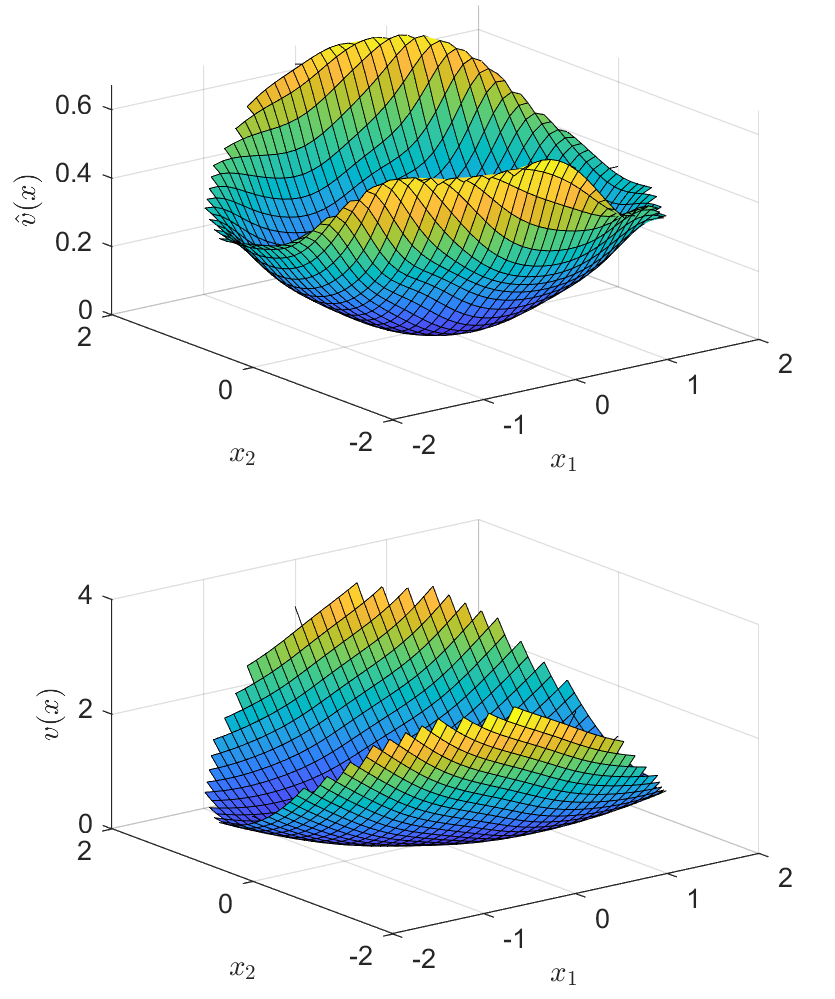}
        \caption{Prediction of Lyapunov function by the Koopman operator learned on weighted RKHS.}
        \label{fig:Ex1-Lyapunov}
    \end{figure}
    \par In this way, we constructed the Lyapunov function estimate $\hat{v}$. A plot is given in Fig. \ref{fig:Ex1-Lyapunov} in comparison to the actual Lyapunov function $V$. It is seen that the estimated Lyapunov function recovers a similar near-ellipsoidal contour shape as $v$. On the other hand, the value of $\hat{v}(x)$ grows less steep than $v(x)$, when $x$ is away from the origin, leading to a visually discernible deformation in the surface plot of the estimated Lyapunov function. 
    Essentially, the learning of a Koopman operator on a weighted RKHS favors the prediction of the dynamics on higher-weight regions (when $\|x\|$ is large) than lower-weight regions (when $\|x\|$ is small). As shown in Fig. \ref{fig:Ex1-prediction} that plots the actual v.s. predicted evolution of the values of two quadratic functions (starting from a random state), both predictions align well with the dynamics before $x$ is close to the origin. The long-term contributions to the Lyapunov function is therefore clearly under-estimated. 
    \begin{figure}
        \centering
        \includegraphics[width=\columnwidth]{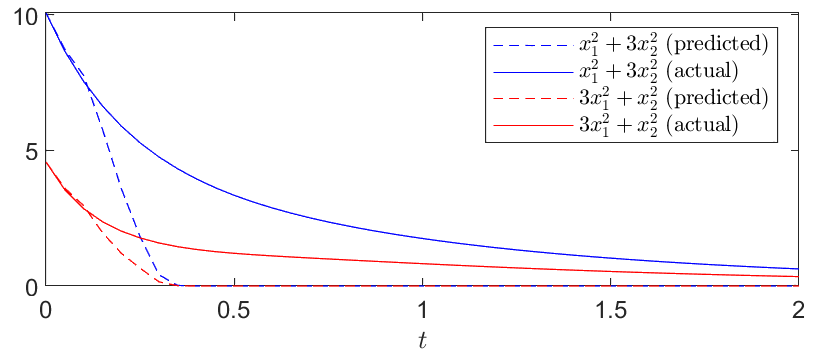}
        \caption{Prediction of quadratic functions by the Koopman operator learned on weighted RKHS.}
        \label{fig:Ex1-prediction}
    \end{figure}
\end{example}

\subsection{Estimation of Zubov Function and DoA}
\par For the estimation of Zubov function, we assume that $\hat{Z}$ and $Z$ are both contractive.
\begin{assumption}\label{assum:Zubov-Koopman-bound}
    $\bar{\alpha} := \max\{\|Z\|, \|\hat{Z}\|\} \in (0, 1)$. 
\end{assumption}
However, the Zubov function should not be estimated as $\hat{Z}^\infty g$ with $g\equiv 1$, since the constant function $1$ does not belong to $\mathcal{C}_w(\Omega)$. Instead, we consider 
\begin{equation}\label{eq:Zubov-observable}
    w(x)g(x) = \frac{w(x)^\nu}{w(x)^\nu + \varsigma^\nu}, \enskip x\in\Omega.
\end{equation}
for given (small) $\varsigma>0$ and some $\nu \geq 1$. This function appears like $1$ in the region where $w(x)\gg \varsigma$ and like $(w(x)/\varsigma)^\nu$ when $x$ is close to the origin. Its $\mathcal{C}_w$-norm is then bounded by 
$$c_\nu = \begin{cases}
    \varsigma^{-1}, & \text{if } \nu=1 \\
    \nu^{-1}(\nu-1)^{(\nu-1)/\nu}\varsigma^{-1}, & \text{if } \nu>1
\end{cases}.$$
With this $w\cdot g$ function as the observable, applying the actual and estimated Zubov-Koopman operator for time $t$, we obtain an approximate Zubov function and its data-driven estimate: 
\begin{equation}\label{eq:Zubov-estimate}
    \hat{\zeta}^t(x) = \hat{Z}^t(w\cdot g), \enskip \zeta^t(x) = Z^t(w\cdot g).
\end{equation}

\par First, we assert that $\zeta^t$ generated under the actual Zubov-Koopman operator is a ``good'' indication of the DoA. To this end, some regular assumptions are made on the pertinent functions $w$ and $\eta$. 
\begin{proposition}\label{prop:Zubov}
    Assume that 
    \begin{enumerate}[label=(\roman*)]
        \item for all $x$ not in the DoA, $\eta(x)\geq \underline{\eta}$ for some $\underline{\eta}>0$;
        \item there exists a sublevel set of the weighting function $S_a = \{x\in\Omega: w(x)\leq a\}$ that is contained in the DoA, and $\mu_a := \sup_{x\in S_a} \sum_{t=0}^\infty (\eta\circ f^t)(x) < \infty$; 
        \item there exists a constant $\underline{\alpha}>0$ such that $w(f(x)) \geq \underline{\alpha} w(x)$ for all $x$ in the DoA. 
    \end{enumerate}
    Then for all $x$ not in the DoA, $\zeta^t(x) \leq e^{-\underline{\eta}t}$; for all $x$ such that $w(x) = a$ and $t\leq \log_{1/\underline{\alpha}}(a/\varsigma)$, $\zeta^t(x)\geq e^{-\mu_a}/2$. 
\end{proposition}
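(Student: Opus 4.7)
The plan is to first unwind the formula for $Z^t$ explicitly. Although the definition $Z(w\cdot g) = e^{-\eta}\cdot (w\circ f)\cdot (g\circ f)$ is stated using the weighted factorization, for any $h \in \mathcal{C}_w(\Omega)$ (writing $h = w\cdot g$ with $g = h/w$) a one-line calculation gives the cancellation $(Zh)(x) = e^{-\eta(x)}\cdot w(f(x))\cdot g(f(x)) = e^{-\eta(x)}\cdot h(f(x))$, so by induction
\begin{equation*}
(Z^t h)(x) = \exp\!\left(-\sum_{s=0}^{t-1}\eta(f^s(x))\right) h(f^t(x)).
\end{equation*}
Applying this to the specific $h(x) = w(x)^\nu/(w(x)^\nu+\varsigma^\nu)$ from \eqref{eq:Zubov-observable} produces a closed-form expression for $\zeta^t(x)$. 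Note that $h$ takes values in $[0,1)$, which will be used repeatedly.

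For the first claim, suppose $x$ is not in the DoA. Since the DoA is forward-invariant, each iterate $f^s(x)$ is also outside the DoA, so assumption (i) gives $\eta(f^s(x)) \geq \underline{\eta}$ for $s = 0,\dots,t-1$. Combining with $h(f^t(x)) \leq 1$ immediately yields $\zeta^t(x) \leq \exp(-\underline{\eta}t)$.

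For the second claim, fix $x$ with $w(x)=a$, so $x \in S_a$ and, by assumption (ii), $x$ lies in the DoA and $\sum_{s=0}^\infty \eta(f^s(x)) \leq \mu_a$. This bounds the exponential factor below by $e^{-\mu_a}$. It remains to show the $h$-factor is at least $1/2$, which is equivalent to $w(f^t(x)) \geq \varsigma$. Here assumption (iii) enters: since $f^s(x)$ remains in the DoA for all $s$ (again by invariance), iterating $w(f(y))\geq \underline{\alpha}\, w(y)$ gives $w(f^t(x)) \geq \underline{\alpha}^t a$. The hypothesis $t \leq \log_{1/\underline{\alpha}}(a/\varsigma)$ rearranges exactly to $\underline{\alpha}^t a \geq \varsigma$, so $h(f^t(x)) \geq 1/2$ and therefore $\zeta^t(x) \geq e^{-\mu_a}/2$.

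The only mildly delicate point is the operator identity $(Zh)(x) = e^{-\eta(x)}h(f(x))$ — a definition written in weighted form can make the composition-and-weight interplay appear more involved than it is, so I would state that simplification as the first lemma of the proof. After that, both conclusions follow from elementary monotonicity estimates using each of the three structural assumptions on $\eta$ and $w$ in turn.
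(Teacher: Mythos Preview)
Your proof is correct and follows essentially the same route as the paper: write out $\zeta^t(x) = \exp\bigl(-\sum_{s=0}^{t-1}\eta(f^s(x))\bigr)\,(w\cdot g)(f^t(x))$, then bound the two factors separately using assumptions (i)--(iii). One small wording issue: for the first claim you need that the \emph{complement} of the DoA is forward-invariant (which it is, since an orbit not converging to the origin has a tail that also fails to converge), not merely that the DoA itself is; your conclusion is right, but the phrase ``since the DoA is forward-invariant'' should be adjusted accordingly.
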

\begin{proof}
    We note that 
    $$\zeta^t(x) = \exp\left(\sum_{s=0}^{t-1} \eta\circ f^s (x) \right) (w\cdot g)\circ f^t(x).$$
    For $x$ not in the DoA, the exponential factor in $\zeta^t$ is upper-bounded by $\exp(\underline{\eta}t)$ by condition (i), while the remaining term is bounded by $1$. Thus the first half of the proposition is proved. For any $x$ such that $w(x) = a$ (and thus contained in the DoA), the exponential factor in $\zeta^t$ is lower-bounded by $\exp(-\mu)$ by condition (ii), and the remaining term is lower-bounded by $(\underline{\alpha}^t w(x))^\nu / [(\underline{\alpha}^t w(x))^\nu + \varsigma^\nu]$. When $t$ is short such that $\underline{\alpha}^t a \geq \varsigma$ (i.e., $t\leq \log_{1/\underline{\alpha}} (a/\varsigma)$), we have $\zeta^t(x) \geq e^{-\mu_a}/2$. 
\end{proof}

\begin{corollary}
    Suppose that the assumptions in Proposition \ref{prop:Zubov} hold. Then the DoA contains $S_a = \{x:w(x)\leq a\}$ where 
    $$a = \sup\left\{a: \log\frac{\underline{\alpha}a}{\varsigma} \geq \frac{\mu_a + \log 2}{\underline{\eta}}\log\frac{1}{\underline{\alpha}} \right\}. $$
\end{corollary}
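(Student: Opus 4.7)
The plan is to argue by contradiction via Proposition~\ref{prop:Zubov}. Define $a^{*} := \sup\{a' \geq 0 : S_{a'} \subset \mathrm{DoA}\}$, which is strictly positive by condition~(ii) of the proposition. I will show (i) that $a^{*}$ cannot satisfy the corollary's strict inequality and (ii) that the corollary's inequality is trivially violated for every $a > a^{*}$ (because $\mu_a = \infty$ there). Combined, the supremum $a_c$ defined in the corollary must satisfy $a_c \leq a^{*}$, which together with the divergence of $\mu_a$ as $a \uparrow a^{*}$ yields $S_{a_c} \subset \mathrm{DoA}$.

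The core step is to locate a boundary point $x^{*}$ with $w(x^{*}) = a^{*}$ and $x^{*} \notin \mathrm{DoA}$. Pick $a_n \downarrow a^{*}$ and $y_n \in S_{a_n}\setminus\mathrm{DoA}$ (these exist by the definition of $a^{*}$), extract a convergent subsequence by compactness of $\Omega$, and use the fact that $S_{a'}\subset\mathrm{DoA}$ for every $a'<a^{*}$ to pin $w(x^{*})=a^{*}$. At this $x^{*}$, the upper bound from Proposition~\ref{prop:Zubov} gives $\zeta^t(x^{*})\leq e^{-\underline{\eta}\,t}$. For the matching lower bound, I would take DoA points $z_n\to x^{*}$ with $w(z_n)\to a^{*}$ from below, apply the lower bound of Proposition~\ref{prop:Zubov} at level $w(z_n)$, and pass to the limit using continuity of $\zeta^t$ together with the monotonicity $\mu_{w(z_n)}\le\mu_{a^{*}}$, obtaining $\zeta^t(x^{*})\geq e^{-\mu_{a^{*}}}/2$ for every integer $t\leq\log_{1/\underline{\alpha}}(a^{*}/\varsigma)$. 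Compatibility of the two bounds, applied in particular at $t=\lfloor\log_{1/\underline{\alpha}}(a^{*}/\varsigma)\rfloor$ and using $\lfloor x\rfloor\geq x-1$, rearranges (after multiplying through by $\log(1/\underline{\alpha})$) to $\log(\underline{\alpha}\,a^{*}/\varsigma)\leq(\mu_{a^{*}}+\log 2)\log(1/\underline{\alpha})/\underline{\eta}$ -- precisely the failure of the corollary's strict inequality at $a^{*}$. The integer rounding of $t$ is exactly what produces the extra additive factor of $\log(1/\underline{\alpha})$ seen on the right-hand side of the displayed inequality.

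The main obstacle I anticipate is the construction of the lower-bound sequence $z_n$: finding DoA points with strictly smaller weights converging to $x^{*}$ requires $x^{*}$ to lie in the closure of $\{w<a^{*}\}$, which holds under a mild non-degeneracy of $w$ (e.g., no strict local minima away from the origin, as is typical for $w(\cdot) = \|\cdot\|$ or radially monotonic $w$). Once this is handled, the claim $\mu_a=\infty$ for $a>a^{*}$ is routine: any $y\in S_a\setminus\mathrm{DoA}$ has $f^s(y)\notin\mathrm{DoA}$ for all $s$ by forward-invariance of the complement of $\mathrm{DoA}$, and condition~(i) then forces $\sum_s \eta\circ f^s(y)=\infty$. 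Combined with the divergence of $\mu_a$ as $a\uparrow a^{*}$ (which produces a positive gap $a^{*}-a_c>0$), the containment $S_{a_c}\subset\mathrm{DoA}$ is immediate.
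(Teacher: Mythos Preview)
The paper's proof is a one-step direct argument: fix $t=\lfloor\log_{1/\underline{\alpha}}(a/\varsigma)\rfloor$, note that Proposition~\ref{prop:Zubov} gives $\zeta^t(x)\ge e^{-\mu_a}/2$ on $\{w=a\}$ (when $S_a\subset\mathrm{DoA}$) and $\zeta^t(x)\le e^{-\underline\eta\, t}\le \exp\bigl(-\underline\eta(\log_{1/\underline\alpha}(a/\varsigma)-1)\bigr)$ off the DoA, and observe that the corollary's inequality is exactly the statement that the first bound exceeds the second.  That separation is declared sufficient for $S_a\subset\mathrm{DoA}$; there is no critical threshold $a^*$, no limiting construction along boundary points, and no discussion of whether the supremum is attained.

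Your route is genuinely different.  You introduce $a^*=\sup\{a':S_{a'}\subset\mathrm{DoA}\}$ and aim to show $a_c\le a^*$ and then force strict inequality.  Much of your machinery is unnecessary, however.  Your step~(ii) already yields $a_c\le a^*$ by itself: whenever the corollary's inequality holds at $a$ one must have $\mu_a<\infty$, and $\mu_a<\infty$ alone (via forward-invariance of $\Omega\setminus\mathrm{DoA}$ together with condition~(i)) forces $S_a\subset\mathrm{DoA}$.  Your step~(i)---the construction of $x^*\notin\mathrm{DoA}$ with $w(x^*)=a^*$ and the attempt to sandwich $\zeta^t(x^*)$ between \emph{both} bounds of the proposition---is redundant: in the only case where such an $x^*$ exists one has $\mu_{a^*}=\infty$, so the lower bound $e^{-\mu_{a^*}}/2$ degenerates to $0$ and the ``compatibility'' inequality you derive is vacuous.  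Where your approximating sequence $z_n\to x^*$ is genuinely needed is only for your final claim that $\mu_a\to\infty$ as $a\uparrow a^*$ (to force the strict gap $a_c<a^*$), and there your acknowledged non-degeneracy hypothesis on $w$ is doing real work that the paper never states.  In short: your argument is more scrupulous about the supremum than the paper's terse sketch, at the cost of an extra hypothesis and several detours the paper does not take.
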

\begin{proof}
    Based on the previous proposition, when $t = \lfloor \log_{1/\underline{\alpha}} (a/\varsigma) \rfloor$, $\zeta^t(x) \geq e^{-\mu_a}/2$ for all $x\in S_a$ if $S_a$ is contained in the DoA, and $\zeta^t(x) \leq \exp(-\underline{\eta} \lfloor \log_{1/\underline{\alpha}} (a/\varsigma) \rfloor) \leq \exp(-\underline{\eta}(\log_{1/\underline{\alpha}} (a/\varsigma) - 1)$ for all $x$ not in the DoA. For $S_a$ to be contained in the DoA, it suffices to have 
    $$\underline{\eta}(\log_{1/\underline{\alpha}} (a/\varsigma) - 1) \geq \mu_a + \log 2, $$
    which is the condition in the corollary. 
\end{proof}

Finally, similar to the Lyapunov function estimation, we establish a mean squared error bound on the estimation $\hat{\zeta}^t$. 
\begin{theorem}
    Suppose that Assumptions \ref{assum:dynamics}, \ref{assum:kernel}, and \ref{assum:Zubov-Koopman-bound} hold. Then given \eqref{eq:Zubov-observable} and \eqref{eq:Zubov-estimate}, for all $t \in \mathbb{N}\backslash \{0\}$, we have
    \begin{equation}\label{eq:Zubov-bound}
        \mathbb{E}_{x\sim\mathcal{P}}\left[ \big{|}\hat{\zeta}^t(x) - \zeta^t(x)\big{|} \right] \leq t\bar{\alpha}^{t-1} \rho_w(\hat{Z})^{1/2}c_\nu \varsigma^{-1}, 
    \end{equation}
    where 
    $$\rho_w(\hat{Z}) = \mathbb{E}_{x\sim \mathcal{P}}\left[\|\hat{Z}^\ast k_w(x, \cdot) - Z^\ast k_w(x, \cdot)\|^2 \right].$$
\end{theorem}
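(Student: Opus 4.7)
The plan is to bound the pointwise error $\hat{\zeta}^t(x) - \zeta^t(x)$ via a one-parameter telescoping of $\hat{Z}^t - Z^t$, evaluate each summand at $x$ using the reproducing property of $k_w$, and then take the expectation $\mathbb{E}_{x\sim\mathcal{P}}[\,|\cdot|\,]$ so that each discrepancy $\hat{Z}-Z$ acts on a $k_w$-atom under $\mathcal{P}$, which is exactly what $\rho_w(\hat{Z})^{1/2}$ controls by Cauchy--Schwarz.

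I would start from the telescoping identity
\begin{align*}
\hat{Z}^t - Z^t \;=\; \sum_{s=0}^{t-1}\, \hat{Z}^{\,s}(\hat{Z}-Z)\,Z^{\,t-1-s},
\end{align*}
apply it to the observable $w\cdot g$ from \eqref{eq:Zubov-observable}, and set $\psi_s := Z^{t-1-s}(w\cdot g)$. By Proposition \ref{prop:Zubov-bound} one has $\|\psi_s\|_{\mathcal{C}_w}\le\bar{\alpha}^{t-1-s}c_\nu$. Evaluating each summand at $x$ through the reproducing property of $k_w$ gives
\begin{align*}
\bigl[\hat{Z}^{\,s}(\hat{Z}-Z)\psi_s\bigr](x) = \bigl\langle (\hat{Z}^{*})^{s} k_w(x,\cdot),\,(\hat{Z}-Z)\psi_s\bigr\rangle,
\end{align*}
and Assumption \ref{assum:Zubov-Koopman-bound} yields $\|(\hat{Z}^{*})^{s}\|\le\bar{\alpha}^{s}$.

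Applying Cauchy--Schwarz in the RKHS inner product, moving $(\hat{Z}-Z)$ onto the $k_w$-atom by adjointness, and then taking expectation over $x\sim\mathcal{P}$, each of the $t$ summands contributes $\bar{\alpha}^{s}\cdot\bar{\alpha}^{t-1-s} = \bar{\alpha}^{t-1}$, a factor of $c_\nu$ from $\|\psi_s\|_{\mathcal{C}_w}$, and a factor $\mathbb{E}_x[\|(\hat{Z}^{*}-Z^{*})k_w(x,\cdot)\|] \le \rho_w(\hat{Z})^{1/2}$ by Jensen's inequality applied to the defining expression of the risk. Summing over $s = 0,\ldots,t-1$ produces the scaling $t\bar{\alpha}^{t-1}c_\nu\rho_w(\hat{Z})^{1/2}$; the remaining factor $\varsigma^{-1}$ in \eqref{eq:Zubov-bound} comes from sharpening the bound on $\|w\cdot g\|$ in the appropriate RKHS sense, since the explicit form $w^\nu/(w^\nu+\varsigma^\nu)$ in \eqref{eq:Zubov-observable} has an $\mathcal{H}_{k,w}$-level norm that carries an additional $\varsigma^{-1}$ beyond the uniform bound $c_\nu$.

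The main obstacle I anticipate is the fact that $\hat{Z}$ acts on the RKHS $\mathcal{H}_{k,w}$ whereas $Z$ is a priori defined only on $\mathcal{C}_w$, so the telescoping interleaves operators living on two distinct function spaces, and the intermediate iterates $\psi_s = Z^{t-1-s}(w\cdot g)$ need not lie in $\mathcal{H}_{k,w}$. Moreover, when the discrepancy $(\hat{Z}^{*}-Z^{*})$ is transferred through the $s$ factors of $(\hat{Z}^{*})^{s}$ preceding it, the quantity that arises is $(\hat{Z}^{*}-Z^{*})k_w(y,\cdot)$ for $y$ distributed as a pushforward of $\mathcal{P}$ under iterates of $f$, rather than under $\mathcal{P}$ itself, which is precisely what $\rho_w(\hat{Z})$ controls. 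The cleanest way to handle both issues is to invoke the density of $\mathcal{H}_{k,w}$ in $\mathcal{C}_w$ from Proposition~2 to approximate $\psi_s$ in the RKHS, and to appeal to the compactness of $\Omega$ and equivalence of $\mathcal{P}$ with its pushforwards to compare the expectations; once these technicalities are resolved, the remainder of the estimate is a routine aggregation of operator-norm and Cauchy--Schwarz bounds.
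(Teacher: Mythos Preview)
Your telescoping-plus-reproducing-property-plus-Cauchy--Schwarz scheme is exactly the paper's approach: the paper writes $\hat{\zeta}^t(x)-\zeta^t(x)=\langle ((\hat{Z}^\ast)^t-(Z^\ast)^t)k_w(x,\cdot),\,w\cdot g\rangle$, bounds by $t\bar{\alpha}^{t-1}\mathbb{E}[\|(\hat{Z}-Z)^\ast k_w(x,\cdot)\|]\,\|w\cdot g\|_{\mathcal{C}_w}$, and then applies Jensen/Cauchy--Schwarz to obtain $\rho_w(\hat{Z})^{1/2}$. So the strategy is right; the problems lie in two of your side claims.

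First, your account of the extra $\varsigma^{-1}$ is wrong. The paper's proof uses only $\|w\cdot g\|_{\mathcal{C}_w}\le c_\nu$, and the definition of $c_\nu$ given just above \eqref{eq:Zubov-estimate} already carries a factor $\varsigma^{-1}$ (indeed $c_1=\varsigma^{-1}$). There is no additional ``RKHS-level'' norm of $w\cdot g$ entering the argument; the product $c_\nu\varsigma^{-1}$ in the stated bound is a redundancy in the paper's statement rather than something your proof must manufacture. Do not try to find a second $\varsigma^{-1}$---it is not there in the argument.

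Second, the obstacle you flag about pushforwards is genuine, but your proposed resolution does not work. Once you telescope and pass to adjoints, the discrepancy acts on $(Z^\ast)^s k_w(x,\cdot)=c_s(x)\,k_w(f^s(x),\cdot)$, so the expectation that appears is $\mathbb{E}_{x\sim\mathcal{P}}[\,\|(\hat{Z}^\ast-Z^\ast)k_w(f^s(x),\cdot)\|\,]$, i.e., the risk under the pushforward $f^s_\ast\mathcal{P}$ rather than under $\mathcal{P}$. Appealing to ``equivalence of $\mathcal{P}$ with its pushforwards'' is not justified here: for an asymptotically stable system on a compact $\Omega$, $f^s_\ast\mathcal{P}$ concentrates toward the origin and is in general neither equal nor comparable to $\mathcal{P}$, and compactness of $\Omega$ alone gives no uniform Radon--Nikodym bound. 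Likewise, density of $\mathcal{H}_{k,w}$ in $\mathcal{C}_w$ does not by itself let you replace $\psi_s\in\mathcal{C}_w$ by an RKHS element without controlling the approximation error. The paper's own two-line proof glosses over this same point; it simply asserts the bound $t\bar{\alpha}^{t-1}\mathbb{E}[\|(\hat{Z}-Z)^\ast k_w(x,\cdot)\|]$ without tracking the shift in base point. So you should not expect to close this gap with the tools you listed; either accept the same informal step the paper takes, or note explicitly that the bound as stated requires an additional hypothesis relating $\mathcal{P}$ to its pushforwards under $f$.
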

\begin{proof}
    Note that $\hat{\zeta}(x) - \zeta(x) = \langle k_w(x, \cdot), (\hat{Z}^t-Z^t)(w\cdot g)\rangle = \langle ((\hat{Z}^\ast)^t - (Z^\ast)^t)k_w(x,\cdot), w\cdot g\rangle$. Thus
    $$\text{L.H.S. of \eqref{eq:Zubov-bound}} \leq t\bar{\alpha}^{t-1} \mathbb{E}[\|(\hat{Z} - Z)^\ast k_w(x, \cdot)\|]\|w\cdot g\|_{\mathcal{C}_w}.$$
    Using Cauchy-Schwarz inequality:
    $$\mathbb{E}[\|(\hat{Z} - Z)^\ast k_w(x, \cdot)\|] \leq \mathbb{E}[\|(\hat{Z} - Z)^\ast k_w(x, \cdot)\|^2]^{1/2}, $$
    the conclusion is obtained. 
\end{proof}

\begin{example}
    We consider the following system \cite[Ex.~4.28]{khalil2002nonlinear}:
    \begin{equation}
        \dot{x}_1 = -x_1, \enskip 
        \dot{x}_2 = (x_1x_2 - 1)x_2^3 + (x_1x_2-1+x_1^2)x_2. 
    \end{equation}
    The system has only one equilibrium point at the origin. Yet since $R = \{(x_1, x_2): x_1x_2 \geq 2\}$ is an invariant set, the origin cannot be globally asymptotically stable. We therefore would like to estimate the Zubov-Koopman operator on $\mathcal{H}_{k, w}(\Omega)$ with $\Omega = [-2, 2]\times [-2, 2]$, $w(x) = \|x\|^{1/2}$, $\eta(x) = \|x\|^2/2$ and $k(x, y) = \exp(-4\|x - y\|^2)$. We sampled $m=500$ points under the uniform distribution over $\Omega$ with a discretization time of $0.025$. Due to the same reason as explained in the previous example, the dynamics predicted by the Zubov-Koopman operator in the region close to the origin tends to have an overestimated contraction. Hence, in the plot of $\hat{\zeta}^t$, there will be an extra low-value region near the origin even though it is contained in the DoA. As shown in Fig. \ref{fig:Ex2-Zubov}, the surface plot of the estimated Zubov function $\hat{\zeta}^t$ (using $t=0.15$) has a lowered central region (encircled by a high-valued region). However, similar to the actual Zubov function $\zeta$, the estimated $\hat{\zeta}$ has close-to-zero values in the invariant, non-attracting region $R$.  
    \begin{figure}
        \centering
        \includegraphics[width=\columnwidth]{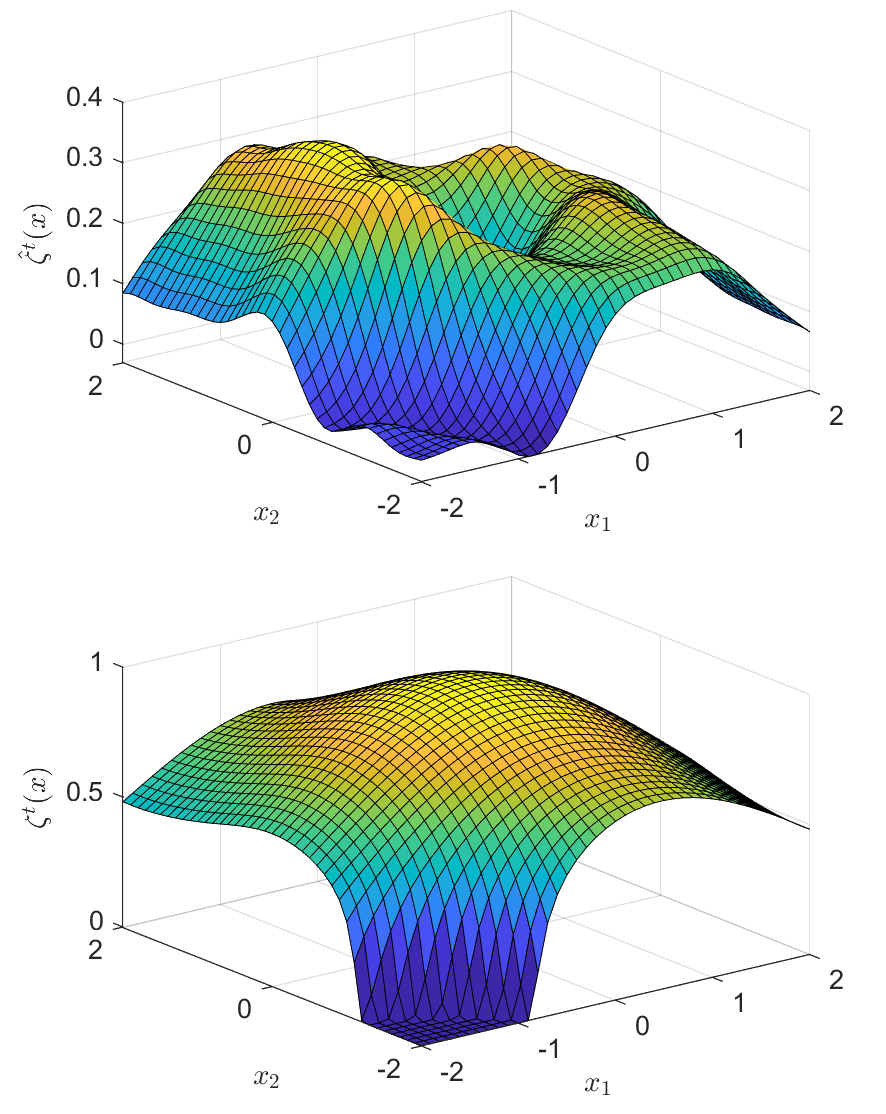}
        \caption{Prediction of the Zubov function by the Zubov-Koopman operator learned on weighted RKHS.}
        \label{fig:Ex2-Zubov}
    \end{figure}
\end{example}

\section{\textsc{Conclusions}}
In this paper, the concepts of weighted continuous function space and weighted RKHS are introduced for kernel-based learning of Koopman operators of asymptotically stable nonlinear systems, where the weighting function represents the prior knowledge about the decay rate. A probabilistic bound on the generalization loss (i.e., the $L_2$ error on predicting the evolution of kernel functions) is provided, which results in probabilistically correct constructions of Lyapunov functions and Zubov functions for DoA estimation. 
Intrinsically, the proposed approach exploits the prior information that the system has a contractive weighting function $w$, or so with the compensation of a given contraction factor $e^{-\eta(\cdot)}$. Hence, a Koopman description of the unknown dynamics is learned in a way that favors the prediction accuracy within high-weight regions. Therefore, despite the error in the low-weight state prediction, the stability-related information is preserved. 

\par For controlled systems, designing a control law for closed-loop stability with a specified Lyapunov function or DoA are challenging problems in a Koopman framework. The investigations will be carried out in future works.  

\bibliographystyle{ieeetr}
\bibliography{root.bib}
\end{document}